\documentclass[journal, 12pt, onecolumn, draftclsnofoot]{IEEEtran}

\usepackage{float}
\usepackage{cite}
\usepackage{algorithmic}
\usepackage{graphicx}
\usepackage{textcomp}
\usepackage{amsmath,amsfonts, amssymb, graphicx,amsthm,epsfig}
\usepackage{bm,bbm} 
\usepackage{dsfont}
\usepackage{mathtools}
\usepackage{array,multirow}
\usepackage{tikz-network}
\usepackage{subfig}
\usepackage{float}

\def\BibTeX{{\rm B\kern-.05em{\sc i\kern-.025em b}\kern-.08em
    T\kern-.1667em\lower.7ex\hbox{E}\kern-.125emX}}

\newcommand{\fracpartial}[2]{\frac{\partial #1}{\partial  #2}}
\newcommand{\bxi}{\bm{\xi}}
\newcommand{\bpsi}{\bm{\psi}}
\newcommand{\bmu}{\bm{\mu}}
\newcommand{\bS}{\bm{\lambda}}

\newcommand{\cN}{\mathcal{N}}

\newcommand{\brho}{\bm{\rho}}

\newcommand{\bL}{\bm{x}}

\newcommand{\cV}{\mathcal{V}}
\newcommand{\bnu}{\bm{\nu}}

\newcommand{\cA}{\mathcal{A}}
\newcommand{\cG}{\mathcal{G}}
\newcommand{\cH}{\mathcal{H}}

\newcommand{\bzet}{\bm{z}}
\newcommand{\cF}{\mathcal{F}}

\newcommand{\bY}{\bm{Y}}
\newcommand{\br}{\bm{r}}

\newcommand{\bR}{\bm{R}}

\newcommand{\bu}{\bm{u}}
\newcommand{\bss}{\bm{s}}

\newcommand{\bx}{\bm{x}}
\newcommand \bX {\mathbf{X}}

\newcommand{\E}{\mathbb{E}}
\newcommand{\bQ}{\mathbb{Q}}

\renewcommand{\AA}{\mathrm{AA}}
\newcommand{\GA}{\mathrm{GA}}
\DeclareMathOperator{\T}{\mathsf{T}}
\DeclareMathOperator{\dob}{\mathsf{Dob}}

\newcommand{\bP}[1]{\mathbb{P}\left(#1\right)}

\renewcommand{\qedsymbol}{$\blacksquare$}
\newcommand{\putqed}{\hfill\qedsymbol}

\newtheorem{theorem}{Theorem}
\newtheorem{definition}{Definition}
\newtheorem{lemma}{Lemma}
\newtheorem{corollary}{Corollary}
\newtheorem{assumption}{Assumption}
\newtheorem{proposition}{Proposition}

\begin{document}
\title{On the Arithmetic and Geometric Fusion of Beliefs for Distributed Inference}

\author{Mert Kayaalp, Yunus \.Inan, Emre Telatar, Ali H. Sayed \vspace{0.5em}\\
\textit{{\normalsize \'{E}cole Polytechnique F\'{e}d\'{e}rale de Lausanne (EPFL)}}
\thanks{Corresponding author: M. Kayaalp, email: mert.kayaalp@epfl.ch. This work was supported in part by grant 205121-184999 from the Swiss National Science Foundation (SNSF).}}

\maketitle

\vspace{-3em}

\begin{abstract}
We study the asymptotic learning rates of belief vectors in a distributed hypothesis testing problem under linear and log-linear combination rules. We show that under both combination strategies, agents are able to learn the truth exponentially fast, with a faster rate under log-linear fusion. We examine the gap between the rates in terms of network connectivity and information diversity. We also provide closed-form expressions for special cases involving federated architectures and exchangeable networks.
\end{abstract}

\begin{IEEEkeywords}
distributed decision-making, linear and logarithmic opinion pools, social learning, asymptotic decay rate, fusion of belief vectors
\end{IEEEkeywords}

\section{\textbf{Introduction}}
\label{sec:introduction}
Statistical inference is the problem of learning a hidden variable from partially informative observations. It pertains to estimation and decision-making tasks, which are essential for engineering design problems. Many modern designs are large-scale and hence inference is apportioned to smaller devices. These devices --- henceforward referred to as agents --- have data acquiring and processing capabilities, and are possibly spatially dispersed. Distributed inference deals with the collaborative working of these agents to infer an unknown phenomenon of interest. 

There are two main settings depending on the topology of agents. The first is a federated setting where agents send their information, e.g., raw data, opinion or belief about the hidden state, to a \emph{fusion center}. The fusion center then aggregates the received messages and performs deduction \cite{tsitsiklis1993}. For example, weather stations at different locations can measure barometric pressure or cloud coverage and send a statistic of the measurements to a centralized unit, e.g., National Weather Service. By combining and processing the information received, the service is able to provide weather forecasts. However, gathering all information at a central location is prone to failure and raises privacy issues. Peer-to-peer networks that rely only on localized interactions offer a remedy. These networks are \emph{fully} decentralized, i.e., there is no central unit. Agents perform local computations and combine their immediate neighbors' information \cite{Sayed14,sayed14proc}. Wireless sensor networks assigned to perform area/industrial monitoring or environmental sensing are examples of fully distributed multi-agent networks \cite{BARBAROSSA2014}.

A fundamental question that arises is how to aggregate information received from the peripheral agents, in federated architectures; or from the neighbors, in fully distributed architectures. There exists a variety of combination strategies and the optimal \emph{pooling} strategy depends heavily on the application at hand \cite{koliander2022fusion}. We focus on two widely used strategies: arithmetic averaging (AA) and geometric averaging (GA). While the former is shown to be better with fusion of random variables or point estimates (\(\nu\)-fusion), the latter is more common in applications involving fusion of probability mass/density functions (\(f\)-fusion) \cite{li2019second}. Combining probability distributions is superior in the sense that a full description about the unknown variable of interest is utilized rather than some statistics of it, such as the mean or the median. Furthermore, this procedure is also compatible with heterogeneous data models across the agents. Therefore, in this work, we evaluate and compare the performance of AA and GA for the fusion of \emph{beliefs}, i.e., distributions over the unknown quantity of interest. 

\subsection{Related Work}

\subsubsection{Distributed Inference}
Distributed inference can be broadly classified into three approaches based on the characteristics of the unknown variable: estimation for continuous variables \cite{Sayed14,sayed14proc,bordley1982,lopes2008,dedecius2017,luengo2018}, detection (hypothesis testing) for categorical variables \cite{Ahlswede1986,Han_1987,chen2006,Hamad2021,inan_fundamental_2021,bajovic2011,matta2016,acemoglu_2011,jadbabaie_2012,zhao_2012,jadbabaie2013information,nedic_2017,inan2022social,toghani2021communication,lalitha_2018,bordignon2021adaptive,shahrampour_2013}, and filtering for dynamic variables \cite{khan_2008,cattivelli2010,hu2012,hlinka_2012,bandyopadhyay_2018,kayaalp_2021}. AA and GA are commonly used and the distinction between them is present in all approaches. We focus on distributed hypothesis testing in this work. Nevertheless, our results can shed light on optimal pooling strategies for distributed estimation and filtering as well.

Hypothesis testing with a fusion center is generally studied under channel imperfections or rate constraints between agents and the central node \cite{tsitsiklis1993, Ahlswede1986,Han_1987,chen2006,Hamad2021,inan_fundamental_2021}. It is well-known that under independent and identically distributed (i.i.d.) data with perfect links and fully-connected topologies, the optimal strategy is to combine log-likelihood ratios of agents \cite{chernoff52}. Hence, references \cite{tsitsiklis1993,chen2006,inan_fundamental_2021} study transmitting log-likelihood ratios after suitable processing. Similarly, appropriate functions of log-likelihood ratios can be shared between agents over decentralized networks \cite{bajovic2011,matta2016}. \\

\subsubsection{Social Learning}
Given the difficulty in performing full-blown Bayesian inference over graphs, social learning is a non-Bayesian inference paradigm that is able to perform distributed hypothesis testing over networks with (possibly bounded) rational agents by relaxing the full Bayesian requirement while still leading to truth learning with probability 1 (e.g., \cite{lalitha_2018,matta2016,jadbabaie_2012,nedic_2017,bordignon2021adaptive}). Under social learning, \emph{beliefs}, i.e., local posterior distributions, over the set of hypotheses are exchanged in lieu of log-likelihood ratios. This practice helps when agents have heterogeneous data models. Furthermore, it helps modeling opinion formation over social networks. Fully Bayesian strategies to find global posterior distributions are studied in  \cite{acemoglu_2011}. However, as indicated, fully Bayesian approaches are often intractable and therefore \emph{locally} Bayesian learning, also known as non-Bayesian learning, becomes necessary and helpful, as already demonstrated in various references including \cite{lalitha_2018,matta2016,bala98,bala2001,demarzo2003,epstein2008non,epstein2010non,golub2010,jadbabaie_2012,zhao_2012}. In this strategy, agents first update their beliefs via Bayesian updates based on their personal observations. Then, they average their immediate neighbors' beliefs with consensus \cite{degroot1974,jadbabaie_2012}, diffusion \cite{Sayed14,zhao_2012,salami2017} or gossip \cite{dimakis2010,shahrampour_2013} updates. These algorithms differ in which beliefs are combined. For instance, consensus algorithms are based on combining one's own updated belief with previous beliefs from neighbors. In this scheme, all agents need to have positive self-reliance for truth learning \cite{jadbabaie_2012}. Diffusion, on the other hand, combines one's own updated belief with the updated beliefs from neighbors. In this case, it is sufficient that at least one agent has positive self-reliance for truth learning \cite{zhao_2012}.

Social learning algorithms can also be classified based on the way beliefs are combined. Most common ways of aggregation are AA \cite{jadbabaie_2012,zhao_2012,jadbabaie2013information}, a.k.a. linear opinion pooling, and GA \cite{lalitha_2018,nedic_2017,inan2022social,bordignon2021adaptive}, a.k.a. logarithmic opinion pooling although there are variants, e.g., min-rule \cite{mitra2021minrule}. In AA, the combined belief is a convex combination of neighboring beliefs; while in GA, logarithms of the beliefs are combined linearly. GA is motivated by the linear combination of log-likelihood ratios in distributed detection theory.  Almost sure truth learning under both AA and GA are established in \cite{jadbabaie_2012,zhao_2012} and \cite{lalitha_2018,nedic_2017,inan2022social,toghani2021communication,bordignon2021adaptive}, respectively. An upper bound for the asymptotic learning rate of AA is given in \cite{jadbabaie2013information}, while the works \cite{lalitha_2018,nedic_2017,inan2022social,toghani2021communication} provide the asymptotic learning rate of social learning for GA. Even though one can conclude from these works that GA is faster than AA for learning the truth, the performance difference is still not clearly established in the literature. Knowing how much gain/loss in learning rate a distributed system would get if GA or AA is used, and how this gap is affected by the graph topology, are useful indicators for many applications. This work contributes to answering these questions. \\ 

\subsubsection{Applications}
In this section, we present some applications where the distinction between AA and GA is helpful. The reader may refer to \cite{koliander2022fusion} for other examples. 

\begin{itemize}
    \item \textit{Multi-sensor signal processing}: Fusion of information remains a challenge in multi-sensor signal processing. For example, this is relevant for Internet of Things (IoT) applications. Generally, controlling multiple sensor nodes by utilizing the \emph{global} observation model might be too complex. To alleviate this problem, opinion pooling strategies like AA and GA become useful when aggregating locally processed information. One possible application is multi-target tracking where agents aim to track an unknown number of objects. GA is used in \cite{uney2013,da2020}, whereas AA is used in \cite{yu2016,li2020bernoulli}. The works \cite{li2019second,li2021some,gao2020} include comparisons of AA and GA for this application. However, they are limited to one step fusion analysis and/or well-behaved distributions. Even though the present work is not explicitly targeting multi-object tracking, general results we obtain here can help elucidate the distinction between AA and GA in this area as well. 
    \item \textit{Machine learning}:
Merging probability vectors is also applicable for situations that require learning from data. For example, the work \cite{bordignon2021learning} proposes a social machine learning strategy where agents combine soft decisions, i.e., they combine belief vectors, in order to solve sequential classification problems. This approach is shown to outperform standard ensemble learning techniques such as AdaBoost \cite{Freund99ashort}. Our results can be extended to this setting too.\\
\end{itemize}

\vspace{-2em}

\subsection{Novelty and Contributions}

Since AA and GA have different attributes that can be useful for different applications (see Section \ref{sec:pooling_functions}), a comparison of their performance is crucial for distributed inference. The comparisons in the literature are limited in the sense that: ($i$) only one step of fusion is studied\cite{li2019second,li2021some,gao2020}, ($ii$) analysis is restricted to well-behaved distributions like Gaussian or Poisson distributions\cite{li2019second,chang2010,gao2020}, or ($iii$) a preliminary comparison is provided without detailed explanation \cite{lalitha_2018,dedecius2018}. In contrast, in this work, we study the {\em repeated} application of AA and GA in a canonical distributed inference problem and {\em without} confining to specific distributions. Moreover, we analyze the performance gap between AA and GA in detail. In particular, for social learning, we have the following results.

\begin{itemize}
    \item In Theorem \ref{thm:existence}, we prove that the agents learn the truth {\em exponentially fast} with the AA fusion rule under the standard diffusion algorithm \cite{Sayed14}. Furthermore, the decay rate of beliefs over a wrong hypothesis --- also called the learning rate --- is {\em constant} and does not depend on the agent.
    \item In Lemma \ref{lem:minrowsum} of Section \ref{sec:bounds}, we provide upper and lower bounds for this decay rate using superadditive and subadditive functions on matrices. Also, an interesting ``inept agent phenomenon'' is discovered by using an appropriate superadditive function.
    \item We provide a variational lower bound on the gap between the decay rates of AA and GA in Theorem \ref{thm:gap}. The bound involves the Dobrushin coefficient \cite[Chapter 2.7]{krishnamurthy_2016} as a network connectivity parameter. If the network is geometrically ergodic \cite{krishnamurthy_2016} (for which the Dobrushin coefficient is strictly smaller than 1), then the gap is zero if, and only if, the agents observe exactly the same data. Otherwise GA performs better in terms of learning rate.
    \item For the special case of rank-one combination matrices, which is equivalent to architectures with fusion center in terms of performance, the exact decay rate of AA in closed form is given in Proposition~\ref{prop:rankone}.
    \item For exchangeable networks, where no permutation of data across agents can change the dynamics, we also provide a closed form expression for the gap between the decay rates of AA and GA in Theorem \ref{thm:exchangeable}. \\
\end{itemize} 
\vspace{-1em}
\noindent\textbf{Notation:} We work in the probability space $(\Omega,\cF,\mathbb{P})$. We denote an element of $\Omega$ with the letter $\omega$. The random variables are all $\cF$-measurable and are denoted with boldface letters, e.g., $\bL_i$. Sets and events are represented with script-style letters (e.g., $\cA$). $|\cA|$ denotes the cardinality of set $\cA$. An almost sure event refers to an event $\cA \in \cF$ with probability 1. \( D(\cdot||\cdot) \) denotes the KL divergence. $\mathds{1}_{K}$ is the all-ones column vector of dimension $K$.
\vspace{-0.2em}
\section{\textbf{Background and Problem Formulation}}\label{sec:background}
\subsection{Inference Problem}
\begin{figure}[]
	\centering
	\includegraphics[width=2.5in]{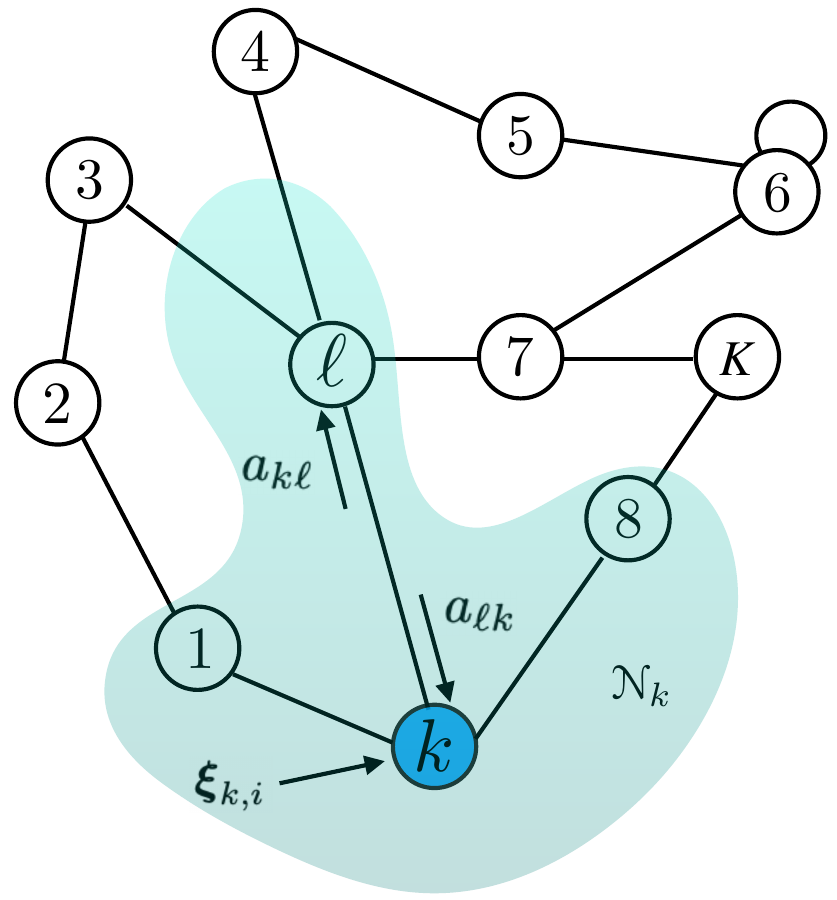}
	\caption{An example network diagram.}
	\label{fig:network_problem_formulation}
\end{figure}
We consider \( K \) agents collaboratively seeking to learn the true state of nature \( \theta^\circ\) from among a finite set of \( H \) hypotheses, \( \Theta = \{1,2,\dots,H \}\). At time instant \( i \), the confidence that agent \( k \) has that ``hypothesis \( \theta \) is the true hypothesis'' is denoted by \( \bmu_{k,i} (\theta) \). In this notation, the symbol $\bmu_{k,i}$ is a probability vector over all hypotheses. Agent \( k \) observes a partially informative and private signal \( \bxi_{k,i} \), which is distributed according to some known marginal likelihood, \( L_k(\cdot| \theta^\circ)\). This observation process is independently and identically distributed (i.i.d.) over time \( i \). However, independence across agents is not required. Agents know their own marginal likelihoods but do not know the likelihoods of other agents. In order to avoid pathological cases, we introduce the following assumption.
\begin{assumption}[Finite KL divergences] For each agent \( k \) and all hypotheses \( \theta \in \Theta \),
\begin{align}
    D ( L_k (\cdot | \theta^\circ ) ||  L_k (\cdot | \theta )) < \infty.
\end{align}\putqed
\end{assumption}
In addition, we assume for each wrong hypothesis $\theta \neq \theta^\circ $ that there exists at least one clear-sighted agent who can distinguish this hypothesis from the true hypothesis. This is necessary for learning the truth.
\begin{assumption}[Global identifiability]
For each wrong hypothesis \( \theta \in \Theta \setminus \{ \theta^\circ\} \), there exists at least one agent \( k \) with
\begin{align}
    D ( L_k (\cdot | \theta^\circ ) ||  L_k (\cdot | \theta )) > 0.
\end{align}\putqed
\end{assumption}
Observe that this assumption does not require \emph{local} identifiability, which is the ability of inferring \( \theta^\circ \) without any cooperation. As a result, agents must exchange information in order to uniquely identify the true hypothesis in general.

\subsection{Agent Topology}
There are two architectures of interest. \vspace{-1em} \\
\subsubsection{Decentralized peer-to-peer networks}
For this architecture, we consider a strongly-connected graph \cite{Sayed14} governing the communication topology. In other words, for any agent pair \( (\ell,k)\) there exists a path in between and linking them and, moreover, there exists at least one agent in the network with positive self-reliance, denoted by $a_{kk}>0$ for some $k$. This allows the information to diffuse across the network thoroughly. We associate non-negative combination coefficients  $0\leq a_{\ell k}\leq 1$ with each link from $\ell$ to $k$ and collect these weights into the combination matrix. The strong-connectedness of the graph translates into $A$ being a primitive matrix \cite{Sayed14}. Each \( a_{\ell k} \) represents the weight agent \( k \) assigns to the information received from agent \( \ell \). It is non-zero if, and only if, \( \ell \in \mathcal{N}_k\), i.e., agent \( \ell \) is an immediate neighbor of agent \( k \). The matrix \( A \) is also left-stochastic. This means that the entries on each of its columns add up to 1. Such matrices have a unique eigenvalue at 1, and we denote the corresponding eigenvector $\pi$ whose entries are normalized to add up to one:
\begin{align}
   \mathds{1}_{K}^{\T}\pi=1 ,\quad A\pi=\pi, \quad \mathds{1}_{K}^{\T}A=\mathds{1}_{K}^{\T}.
\end{align} \\
The vector \( \pi \) is the Perron vector of \( A \), and by the Perron-Frobenius theorem, all entries of $\pi$ are positive \cite{Sayed14}. \vspace{-1em}\\
\subsubsection{Architectures with fusion center}
Agents communicate with a central node instead of communicating with other agents. The central node aggregates the information received from peripheral agents by taking a weighted average. Note that the performance of this architecture is equivalent to the performance of a fully-connected network topology with a rank-one combination matrix \( A \). Thus, in the following, we focus on general combination matrices \( A \) described for decentralized networks. For the special case of $A$ being rank one, we obtain additional results on top of the general case --- see Proposition \ref{prop:rankone} in Section \ref{sec:rankone}.

\subsection{Pooling Functions}\label{sec:pooling_functions}
We compare linear (AA) \cite{stone1961} and log-linear (GA) \cite{genest1984} fusion of probability vectors in this work. The AA and GA rules in the context of social learning are presented in Section \ref{sec:social_learning}. Here, we comment briefly on some features of AA and GA. In the current work, we do not focus on identifying various features of AA and GA. The reader may refer to \cite{koliander2022fusion} for other useful characteristics. 

While GA is externally Bayesian, that is, Bayesian updates on the beliefs and the aggregation step are commutative, it nevertheless gives agents a \emph{veto power}. That is, if one agent has zero belief on some hypothesis, then the composite belief will also be zero on that same hypothesis regardless of the beliefs by the other agents. The AA rule, on the other hand, does not give this much power to individual agents and hence can be more robust against adversarial attacks. Moreover, in the case of Gaussian beliefs over continuous hidden variables, repeated application of GA preserves Gaussianity and is related to the covariance-intersection method \cite{hu2012}. In contrast, the properties of AA-combined distributions diverge from the Gaussian distributions. 

\subsection{Diffusion Social Learning based on AA and GA}\label{sec:social_learning}
Under diffusion social learning, at each iteration $i$, agents first update their beliefs using a \emph{local} Bayesian rule based on their private observations and compute the following  intermediate beliefs:
\begin{align}\label{eq:dif_adapt_step}
 \bpsi_{k,i} (\theta) &= \frac{L_k(\bxi_{k,i} | \theta)\bmu_{k,i-1} (\theta)}{\sum_{\theta^\prime}L_k(\bxi_{k,i} | \theta^\prime)\bmu_{k,i-1} (\theta^\prime)}.
\end{align}
The next step differs depending on whether AA or GA fusion is used. Under AA, each agent updates its belief vector by computing a weighted arithmetic average of its neighbors' intermediate beliefs. More precisely, for agent $k$, the linear fusion is given by the following rule:
\begin{align}\label{eq:linear_fusion}
   \bmu_{k,i}(\theta) &= \sum_{\ell \in \cN_k}{a_{\ell k}\bpsi_{\ell,i}(\theta)}. \qquad\text{(AA-Diffusion)}
\end{align}
Since the $\{a_{\ell k}\}$ are positive coefficients that add up to one, the resulting vector $\bm{\mu}_{k,i}$ will be a belief vector with its entries adding up to one. Under GA, on the other hand, agents average the intermediate beliefs of their neighbors in a geometric manner and also perform a normalization step to ensure that the resulting vector continues to be a probability vector (with entries adding up to one):
\begin{equation}\label{eq:geometric_fusion}
   \bmu_{k,i}(\theta) \!=\!\frac{\text{exp} \{ \sum_{\ell \in \mathcal{N}_k } a_{\ell k} \log \bpsi_{\ell,i} (\theta )\}}{\sum_{\theta'}\text{exp} \{ \sum_{\ell \in \mathcal{N}_k } a_{\ell k} \log \bpsi_{\ell,i} (\theta')\}} \:\:\text{(GA-Diffusion)}
\end{equation} \vspace{-1em}\\

\noindent \textbf{Remark:} Observe that the support of the averaged belief vectors is the union of the supports in the AA case, while it is the intersection of the supports in the GA case. Hence, in GA, all agents need to have positive initial beliefs for all hypotheses in order not to discard any hypothesis. In contrast, it is enough for at least one agent to have a positive initial belief for any hypothesis in order not to get discarded in AA. \qed \vspace{-1em}\\


\noindent \textbf{Remark:} In the consensus algorithm, agents combine their intermediate beliefs with the previous beliefs of the neighbors instead. Compared with \eqref{eq:linear_fusion}, the combination step of the AA-Consensus is given by:
\begin{align}\label{eq:linear_fusion_consensus}
   \bmu_{k,i}(\theta) = {a_{k k}\bpsi_{k,i}(\theta)} \! + \! \sum_{\ell \in \cN_k\setminus \{k\}}{a_{\ell k} \bmu_{\ell,i}(\theta)} \qquad \quad  \text{(AA-Consensus)}
\end{align}
Similarly, compared with \eqref{eq:geometric_fusion}, the combination step of the GA-Consensus is given by:
\begin{equation}\label{eq:geometric_fusion_consensus}
   \bmu_{k,i}(\theta) = \dfrac{\text{exp} \Big \{ a_{k k} \log \bpsi_{k,i} (\theta ) + \sum_{\ell \in \cN_k\setminus \{k\}} a_{\ell k} \log \bmu_{\ell,i} (\theta ) \Big \}}{\sum_{\theta'}\text{exp} \Big \{ a_{k k} \log \bpsi_{k,i} (\theta' ) + \sum_{\ell \in \cN_k\setminus \{k\}} a_{\ell k} \log \bmu_{\ell,i} (\theta' )\Big \}} \qquad \text{(GA-Consensus)}
\end{equation}
\putqed


\section{\textbf{Existence of Asymptotic Decay Rates}}
Recall that the true hypothesis is fixed and is denoted by $\theta^\circ$. To measure how fast the beliefs converge to the truth (i.e., how fast $\bm{\mu}_{k,i}(\theta^\circ)\rightarrow 1$ and $\bm{\mu}_{k,i}(\theta\neq \theta^\circ)\rightarrow 0$), for each $k$, it is sufficient to study the asymptotic behavior of agent $k$'s belief on a false hypothesis $\theta \neq \theta^\circ$. More precisely, we will study the exponential decay rates of these beliefs, defined as
\begin{equation}\label{eq:asym_learning_rate_def}
    \brho_{k}(\theta) \triangleq -\limsup_{i \to \infty} \frac 1 i \log \bmu_{k,i}(\theta)
\end{equation}
although there are variations. For instance, in \cite{jadbabaie2013information}, the agent regrets are defined as the total variation distance to the truth vector $e_{\theta^\circ}$, whose $\theta^\circ$th element is one and others are zero,
\begin{equation}
   \mathbf{d}_{k,i} \triangleq \frac 1 2\|\bmu_{k,i}(\theta) - e_{\theta^\circ}\|_1,
\end{equation}
and the asymptotic behavior of the average regret on the network is analyzed, i.e.,
\begin{align}\label{eq:TV_regret}
\brho \triangleq \liminf_{i \to \infty}\frac 1 i \bigg|\log \bigg(\sum_{k = 1}^K\mathbf{d}_{k,i}\bigg)\bigg|.
\end{align}
Although they do seem different, it can be shown that for both definitions, the quantities of interest are $\brho_{k}(\theta)$'s --- see Appendix~\ref{app:A} for the proof. Hence, we focus on \eqref{eq:asym_learning_rate_def} for the rest of this work. Note that some definitions of exponential convergence require $\bmu_{k,i}(\theta) \leq C \exp\{-\rho i \}$ where $C$ is a uniformly bounded constant. In this work, we adhere to the definition from the existing social learning literature, where exponential convergence refers to the presence of the exponent, and does not the require a uniform bound on $C$.

In \cite{jadbabaie2013information}, the authors obtained upper and lower bounds for $\brho$ for the \emph{consensus} algorithm. In this section, we show that under the AA diffusion rule \eqref{eq:linear_fusion} that {\em exact} limits exist in \eqref{eq:asym_learning_rate_def} with $\limsup$ replaced by $\lim$ and that these limits are independent of the agent index $k$. That is, we show that 
\begin{equation}
\brho^{(\AA)}(\theta) \triangleq \lim_{i \to \infty} -\frac 1 i \log \bmu_{k,i}(\theta)
\end{equation}
exists almost surely and does not depend on $k$. This means that  the beliefs on false hypotheses decay exponentially at the same rate for all agents. In comparison to GA-diffusion, the analysis for AA-diffusion is much more involved. This is because GA-diffusion is amenable to an analysis that studies the evolution of log-belief ratios, and can benefit from an application of the strong law of large numbers.
\subsection{Brief Analysis of GA-diffusion}
We start with the GA-diffusion rule \eqref{eq:geometric_fusion}. Following the works \cite{lalitha_2018,nedic_2017,inan2022social}, we provide a summary of the studies done for finding the decay rates. It is convenient to introduce the vector of log-belief ratios, $\bS_{i}(\theta) \triangleq [\bS_{1,i}(\theta), \dots, \bS_{K,i}(\theta)]^{\T}$ where
\begin{equation}
\bS_{k,i}(\theta) \triangleq \log\frac{\bmu_{k,i}(\theta^{\circ})}{\bmu_{k,i}(\theta)}.
\end{equation}
It can be easily verified from \eqref{eq:dif_adapt_step} and \eqref{eq:geometric_fusion} that the vector $\bS_{i}(\theta)$ evolves according to the linear stochastic system
\begin{align}\label{eq:LLF_evolution}
   \bS_{i}(\theta) =  A^{\T}(\bS_{i-1}(\theta) + \bL_{i}(\theta))
\end{align}
where $\bL_{i}(\theta) \triangleq [\bL_{1,i}(\theta), \dots, \bL_{K,i}(\theta)]^{\T}$ is the vector of log-likelihood ratios (LLRs):
\begin{equation}
\bL_{k,i}(\theta) \triangleq \log \frac{L_k(\bxi_{k,i} | \theta^\circ)}{L_k(\bxi_{k,i} | \theta)}.
\end{equation}
Repeated ($i$-fold) application of \eqref{eq:LLF_evolution} yields
\begin{align}\label{eq:llr_slln}
    \frac 1 i \bS_{i}(\theta) =  \frac 1 i\sum_{j = 1}^{i}  (A^{\T})^{i-j+1}\bL_{j}(\theta) +  \frac 1 i (A^{\T})^i\bS_{0}^{\T}(\theta).
\end{align}
Now recall that ($i$) $\bL_{i}(\theta)$'s are i.i.d. random vectors, ($ii$) $A^i~\to~\pi\mathds{1}_{K}^{\T}$, with $\pi$ being the Perron vector of $A$, and ($iii$) 
\begin{align}
\E[\bL_{k,i}(\theta)] &= \E\bigg[\log \frac{L_k(\bxi_{k,i} | \theta^\circ)}{L_k(\bxi_{k,i} | \theta)}\bigg] \notag \\&= D(L_k(.|\theta^{\circ}) || L_k(.|\theta)) < \infty
\end{align}
by Assumption 1. Using these conditions we first note that the first term on the right-hand side of \eqref{eq:llr_slln} tends to
\begin{equation}
    \frac 1 i \sum_{j = 1}^{i}  (\mathds{1}_K\pi^{\T})\bL_{j}(\theta),\quad i\to\infty.
\end{equation}
Next, an application of the strong law of large numbers gives almost surely
\begin{align}\label{eq:LLF_rate}
\lim_{i \to \infty}\frac 1 i \bS_{k,i}(\theta) = \sum_{\ell=1}^K \pi_\ell \E[\bL_{\ell,i}(\theta)] > 0
\end{align}
which does not depend on $k$. Result \eqref{eq:LLF_rate} readily implies that for all $\theta \neq \theta^\circ$, $\frac{\bmu_{k,i}(\theta)}{\bmu_{k,i}(\theta^\circ)} \to 0$ and since $\sum_{\theta} \bmu_{k,i}(\theta) = 1$, then $\bmu_{k,i}(\theta^\circ)\to 1$ almost surely.
Hence, with GA-diffusion, the decay rates are the constants given by
\begin{align}\label{eq:rho_def}
    \rho^{(\GA)}(\theta) &\triangleq\lim_{i \to \infty} \frac 1 i \bS_{k,i}(\theta) \notag \\ &= \lim_{i \to \infty} \frac 1 i \log \frac{\bmu_{k,i}(\theta^\circ)}{\bmu_{k,i}(\theta)}\notag \\ &\stackrel{(a)}{=} -\lim_{i \to \infty} \frac 1 i \log \bmu_{k,i}(\theta) \notag \\
      &= \sum_{k=1}^K \pi_k D(L_k(.|\theta^{\circ}) || L_k(.|\theta))
\end{align}
where $(a)$ follows from the fact that $\bmu_{k,i}(\theta^\circ)$ tends to 1 almost surely. Observe that the decay rates are characterized by a weighted average of the KL divergences of the agents --- it is known that these entities reflect the inference capacity of an agent for a hypothesis testing problem \cite{chernoff52}.

\subsection{Asymptotic Decay Rate of AA-Diffusion}
Unfortunately, a similar analysis is not possible for AA-diffusion --- if we attempt to study $\log\bmu_{k,i}(\theta)$ directly, we end up with
\begin{align}\label{eq:intractable2}
    \log\bmu_{k,i}(\theta) = \log\bigg(\sum_{\ell \in \cN_k}{a_{\ell k}\bpsi_{\ell,i}(\theta)}\bigg),
\end{align}
 which does not provide a simple-to-analyze dynamical system like we had with \eqref{eq:LLF_evolution}. Thus, we need to resort to different methods in the following. The authors of \cite{jadbabaie2013information} approached the problem of finding $\brho^{(\AA)}$ by linearizing the dynamical system \eqref{eq:intractable2}. With this method, they were able to lower bound $\brho^{(\AA)}$ by the Lyapunov exponent of the linearized version. We take a different approach by constructing extremal processes that bound $\bmu_{k,i}(\theta)$. \vspace{-1em}\\
 
 \subsubsection{Constructing the Extremal Process}
 Recall the evolution of the process $\{\bmu_{k,i}\}$ given by \eqref{eq:dif_adapt_step} and \eqref{eq:linear_fusion} under AA-diffusion. We wish to simplify the analysis by obtaining an extremal process $\{\bnu_{k,i}\}$, which eventually remains above $\{\bmu_{k,i}\}$ with probability 1. Studying $\{\bnu_{k,i}\}$ will then lead to a bound on the decay rate for AA-diffusion. With this aim, we first recall the following theorem.
 \begin{theorem}[Truth learning \cite{zhao_2012}]\label{thm:Ali}Under Assumptions 1 and 2, when AA-diffusion is executed, all agents learn the truth, i.e., for each agent \( k \) we have almost surely
 \begin{equation}
 \lim_{i \to \infty} \bmu_{k,i}(\theta^\circ) = 1.
 \end{equation}\putqed
 \end{theorem}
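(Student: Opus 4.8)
The plan is to prove the equivalent statement that for every wrong hypothesis $\theta\neq\theta^\circ$ and every agent $k$ one has $\bmu_{k,i}(\theta)\to0$ almost surely; since $\sum_\theta\bmu_{k,i}(\theta)=1$, this is the same as $\bmu_{k,i}(\theta^\circ)\to1$. Because all entries of the Perron vector $\pi$ are strictly positive, it suffices to show that the nonnegative scalar $V_i\triangleq-\sum_{k=1}^K\pi_k\log\bmu_{k,i}(\theta^\circ)$ tends to zero (it dominates $\sum_k\pi_k(1-\bmu_{k,i}(\theta^\circ))$ because $-\log x\ge1-x$). We may assume $\bmu_{k,0}(\theta^\circ)>0$ for all $k$: otherwise restart the clock after the finitely many steps it takes the primitivity of $A$ to make every agent's belief on $\theta^\circ$ positive, after which it stays positive, so $V_i<\infty$. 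I would show $V_i$ is a nonnegative supermartingale.

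Two facts drive this. First, the linear-fusion step \eqref{eq:linear_fusion} is ``drift-free'' for the $\pi$-weighted log-belief: weighted AM--GM (the weights $\{a_{\ell k}\}_\ell$ sum to one since $A$ is left-stochastic) gives $-\log\bmu_{k,i}(\theta^\circ)\le-\sum_\ell a_{\ell k}\log\bpsi_{\ell,i}(\theta^\circ)$, and summing against $\pi$ and using $A\pi=\pi$ yields $V_i\le-\sum_\ell\pi_\ell\log\bpsi_{\ell,i}(\theta^\circ)$. Second, the Bayesian adapt step \eqref{eq:dif_adapt_step} contributes exactly a relative-entropy drift: with $\bar L_{\ell,i}\triangleq\sum_{\theta'}\bmu_{\ell,i-1}(\theta')L_\ell(\cdot|\theta')$ the belief-averaged likelihood of agent $\ell$, one has $-\log\bpsi_{\ell,i}(\theta^\circ)=-\log\bmu_{\ell,i-1}(\theta^\circ)+\log\big(\bar L_{\ell,i}(\bxi_{\ell,i})/L_\ell(\bxi_{\ell,i}|\theta^\circ)\big)$, and, since each ratio $L_\ell(\bxi_{\ell,i}|\theta')/L_\ell(\bxi_{\ell,i}|\theta^\circ)$ has conditional mean $1$ under $\theta^\circ$, a direct computation gives $\E\big[\log\big(\bar L_{\ell,i}(\bxi_{\ell,i})/L_\ell(\bxi_{\ell,i}|\theta^\circ)\big)\mid\cH_{i-1}\big]=-D\big(L_\ell(\cdot|\theta^\circ)||\bar L_{\ell,i}\big)\le0$, where $\cH_{i-1}$ is the history up to time $i-1$ and the divergence is finite because $\bar L_{\ell,i}\ge\bmu_{\ell,i-1}(\theta^\circ)L_\ell(\cdot|\theta^\circ)$. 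Combining the two, $\E[V_i\mid\cH_{i-1}]\le V_{i-1}-\sum_\ell\pi_\ell D\big(L_\ell(\cdot|\theta^\circ)||\bar L_{\ell,i}\big)$. Thus $V_i$ is a nonnegative supermartingale; it converges almost surely, and summing the nonnegative one-step decrements $V_{i-1}-\E[V_i\mid\cH_{i-1}]$, whose total expectation telescopes to a finite bound (one checks $\E[V_0]<\infty$ using Assumption~1), shows $\sum_i\sum_\ell\pi_\ell D\big(L_\ell(\cdot|\theta^\circ)||\bar L_{\ell,i}\big)<\infty$ almost surely. Since $\pi_\ell>0$, this forces $D\big(L_\ell(\cdot|\theta^\circ)||\bar L_{\ell,i}\big)\to0$, i.e.\ $\bar L_{\ell,i}\to L_\ell(\cdot|\theta^\circ)$ in total variation, for every agent $\ell$.

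The remaining and hardest step is to upgrade ``$\bar L_{\ell,i}\to L_\ell(\cdot|\theta^\circ)$ for all $\ell$'' to ``$\bmu_{k,i}(\theta^\circ)\to1$''. When the likelihoods $\{L_\ell(\cdot|\theta')\}_{\theta'}$ of some agent are such that $L_\ell(\cdot|\theta^\circ)$ is not a nontrivial mixture of the others, $\bar L_{\ell,i}\to L_\ell(\cdot|\theta^\circ)$ already pins $\bmu_{\ell,i}\to e_{\theta^\circ}$; but in general one must also exclude degenerate limits in which $\bmu_{\ell,i}$ retains mass on a wrong $\theta$ while $\bar L_{\ell,i}$ still tends to $L_\ell(\cdot|\theta^\circ)$. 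This is where Assumption~2 and the strong connectivity of $A$ enter. For each wrong $\theta$ there is an agent $m$ with $D\big(L_m(\cdot|\theta^\circ)||L_m(\cdot|\theta)\big)>0$, so by the strong law the accumulated LLR $\sum_{j\le i}\bL_{m,j}(\theta)$ grows linearly, i.e.\ the Bayesian sub-step at $m$ keeps pushing $\bmu_m(\theta)$ down relative to $\bmu_m(\theta^\circ)$; and strong connectivity makes the linear-fusion step mix beliefs throughout the network, so that no agent can retain mass on $\theta$ in isolation and the pull-down at $m$ reaches every agent. Quantifying this coupling is delicate: the crude bound $\log\big(\bmu_{k,i}(\theta)/\bmu_{k,i}(\theta^\circ)\big)\le\max_\ell\log\big(\bpsi_{\ell,i}(\theta)/\bpsi_{\ell,i}(\theta^\circ)\big)$ from the mediant inequality is too lossy (its growth rate need not be negative), so one must track the genuine, random, data-dependent fusion weights $a_{\ell k}\bpsi_{\ell,i}(\theta^\circ)/\bmu_{k,i}(\theta^\circ)$ and combine their ergodicity with the strong-law drift above. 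This is precisely the argument of \cite{zhao_2012}, which I would follow, and I expect this last coupling --- not the supermartingale construction --- to be the main obstacle.
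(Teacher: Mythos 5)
First, note that the paper does not prove this statement at all: Theorem~\ref{thm:Ali} is imported verbatim from \cite{zhao_2012}, so there is no internal proof to compare against. Your first half is nevertheless a correct and self-contained piece of analysis: the weighted AM--GM bound through the fusion step \eqref{eq:linear_fusion} together with $A\pi=\pi$, and the identity $-\log\bpsi_{\ell,i}(\theta^\circ)=-\log\bmu_{\ell,i-1}(\theta^\circ)+\log\big(\bar L_{\ell,i}(\bxi_{\ell,i})/L_\ell(\bxi_{\ell,i}|\theta^\circ)\big)$ with conditional drift $-D(L_\ell(\cdot|\theta^\circ)\,||\,\bar L_{\ell,i})$, do make $V_i=-\sum_k\pi_k\log\bmu_{k,i}(\theta^\circ)$ a nonnegative supermartingale, and the telescoping argument correctly yields $D(L_\ell(\cdot|\theta^\circ)\,||\,\bar L_{\ell,i})\to0$ almost surely for every agent. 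This is the standard first stage of truth-learning proofs and is essentially sound (the positivity/restart remark and the finiteness of the KL drift are handled adequately).

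The genuine gap is exactly where you flag it, and flagging it is not the same as closing it. Because only \emph{global} identifiability is assumed, $\bar L_{\ell,i}\to L_\ell(\cdot|\theta^\circ)$ at every agent is compatible with persistent belief mass on hypotheses that are locally indistinguishable at that agent, or on mixtures whose averaged likelihood mimics $L_\ell(\cdot|\theta^\circ)$; ruling this out requires propagating, through the \emph{arithmetic} fusion step, the discriminating power of the one agent $m$ guaranteed by Assumption~2. Your sketch of this step (linear growth of the accumulated LLR at $m$ plus ``mixing'' by strong connectivity) does not constitute an argument: under AA the log-belief ratios do not evolve linearly --- this is precisely the obstruction the paper emphasizes around \eqref{eq:intractable2} --- so the strong-law drift at agent $m$ cannot simply be transported to other agents, and you yourself concede the coupling would have to be taken from \cite{zhao_2012}. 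As a result, your proposal is a correct partial proof plus a citation to the same reference the paper cites, not an independent proof of the theorem; if the intent was to reprove the result, the second stage (from per-agent convergence of the belief-averaged likelihoods to $\bmu_{k,i}(\theta^\circ)\to1$ network-wide) still has to be supplied, e.g., by the martingale-convergence and belief-coupling analysis carried out in \cite{zhao_2012}.
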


 Recall that our setting lies in the probability space $(\Omega, \cF, \mathbb{P})$, where $\Omega$ represents the space of all data sequence realizations over time $\omega \in \Omega$, $\cF$ represents the $\sigma$-field generated by the sequence of data, and $\mathbb{P}$ represents the probability measure over sample paths $\omega \in \Omega$. In light of Theorem \ref{thm:Ali}, if we set $\epsilon > 0$ and define the event that all $\bmu_{k,i}(\theta^\circ)$'s lie above $1-\epsilon$ eventually as
 \begin{equation}\label{eq:set_G}
 \cG(\epsilon) \triangleq \{\omega \in \Omega: \exists i_0\  \bmu_{k,i}(\theta^\circ)\geq 1-\epsilon,\ \forall i > i_0,\forall k\},
 \end{equation}
 observe that $\bP{\cG(\epsilon)} = 1$ as a consequence of Theorem \ref{thm:Ali}. Note that $\cG(\epsilon)$ is also interpreted as the event that there exists an $\bm{i}_0(\omega)$ --- which is a random variable as it depends on $\omega$ --- such that for all agents the true beliefs remain greater than $1-\epsilon$ after $\bm{i}_0^{\text{th}}$ iteration. We now restrict ourselves to $\cG(\epsilon)$ and study the evolution of $\{\bmu_{k,i}\}$ under such restriction.

	Consider a false hypothesis $\theta \neq \theta^\circ$. We study the pathwise trajectories of $\{\bmu_{k,i}(\theta)\}$ for an outcome in $\cG(\epsilon)$, that is, we pick an $\omega \in \cG(\epsilon)$. According to the definition of $\cG(\epsilon)$, there exists an $\bm{i}_0(\omega)$ such that for all $i > \bm{i}_0(\omega)$:
	\begin{align}
	\label{eq:linear_update}\bmu_{k,i}(\theta) &=  \sum_{\ell \in \cN_k} a_{\ell k} \frac{\bmu_{\ell,i-1}(\theta)L_{\ell}(\bxi_{\ell, i}|\theta)}{\sum_{\theta'} \bmu_{\ell, i-1}(\theta')L_{\ell}(\bxi_{\ell, i}|\theta')}\\
	\label{eq:upper_process}
	&\leq \sum_{\ell \in \cN_k} a_{\ell k} \frac{\bmu_{\ell,i-1}(\theta)L_{\ell}(\bxi_{\ell, i}|\theta)}{(1-\epsilon)L_{\ell}(\bxi_{\ell, i}|\theta^\circ)}.
	\end{align}
	Let \begin{align}
	    \br_{k,i}(\theta) \triangleq \frac{L_{k}(\bxi_{k, i}|\theta)}{L_{k}(\bxi_{k, i}|\theta^\circ)}
	\end{align} be the likelihood ratio of the freshly observed data by agent $k$ at time $i$, between the hypotheses $\theta$ and $\theta^\circ$. We define the extremal process $\{\bnu_{k,i}\}$ as the process that evolves according to
	\begin{align}\label{eq:nu_process}
	    \bnu_{k,i}(\theta) \triangleq (1-\epsilon)^{-1}\sum_{\ell \in \cN_k} a_{\ell k}\bnu_{\ell,i-1}(\theta)\br_{\ell,i}(\theta)
	\end{align}
	for all $i \geq \bm{i}_0(\omega)$ and with $\bnu_{k,\bm{i}_0}(\theta) = \bmu_{k,\bm{i}_0}(\theta)$. Comparing \eqref{eq:nu_process} with \eqref{eq:upper_process}, we see that $\{\bmu_{k,i}\}$ is upper bounded by $\{\bnu_{k,i}\}$ for all $i\geq \bm{i}_0(\omega)$. \vspace{-1em}\\
	
	\noindent\textbf{Remark:} For the rest of the work, we fix a false hypothesis $\theta \neq \theta^\circ$ and omit $\theta$ as an argument for brevity. This is because the analysis is the same for all $\bmu_{k,i}(\theta)$. We write the dependencies whenever we want to emphasize them. \putqed \vspace{-1em}\\
	
	The transition from $\bnu_{k,i-1}$ to $\bnu_{k,i}$ given by \eqref{eq:nu_process} is a random linear transform. Let $\bnu_i \triangleq [\bnu_{1,i},\dots,\bnu_{K,i}]^{\T}$ and define the diagonal \(K \times K\) random diagonal matrices $\bR_i$ with their $k^{\text{th}}$ diagonal element being $\br_{k,i}$. Then for all $\omega \in \cG(\epsilon)$, expression \eqref{eq:nu_process} leads to the following vector relation:
	\begin{align}\label{eq:matrix_form}
	    \bnu_{i} = (1-\epsilon)^{-1}(A^{\T}\bR_i) \bnu_{i-1}, \quad \forall i > \bm{i}_0(\omega).
	\end{align}
    Equation \eqref{eq:matrix_form} highlights that the asymptotic behavior of the random matrix product \begin{equation}\label{eq:Yi_def}
    \bY_i \triangleq \prod_{j = 1}^i (A^{\T}\bR_j)\end{equation}
    plays an important role. Observe that since $\{\bR_i\}$ is a stationary sequence, the asymptotic behavior remains unchanged under any time shift. Hence, starting the product from $j=1$ in equation \eqref{eq:Yi_def} is without loss of generality. We now turn our attention to the analysis of the random matrices $\{\bY_i\}$. \vspace{-1em}\\
\subsubsection{Asymptotic Behavior of $\{{\it \bY}_i\}$}
    The asymptotic behavior of random matrix products is an important and challenging problem with a long history, which includes the preliminary study \cite{Bellman}, and later the seminal work \cite{Furstenberg}. Although some results exist under certain assumptions on the structure of the random matrices to be multiplied, the problem remains open in general. One difficulty is that the non-commutativity of matrices under multiplication prevents the use of well-known convergence theorems such as the law of large numbers. Extending the result of \cite{Furstenberg}, reference \cite{Kingman} studied the more general case of ``subadditive processes'', and derived, under some fairly general conditions, an ergodic theorem known as Kingman's subadditive ergodic theorem \cite{Kingman}. The analysis of random matrix products turns out to be a special case of this result.
    \begin{theorem}[Kingman's subadditive ergodic theorem\cite{Kingman}]\label{thm:Kingman}
    Consider a stationary sequence of random matrices $\{\bX_i\}$ and suppose that the elements of $\bX_i$'s are positive and that their logarithms have finite expectations. Let $\bY_i \triangleq \prod_{j=1}^i \bX_j$. Then, the limit
\begin{equation}
\bm{\gamma} =  \lim _{i \rightarrow \infty} \frac 1 i \log \left[\bY_{i}\right]_{\ell k}
\end{equation}
exists and is finite almost surely and in the mean, and does not depend on $\ell$ or $k$. Furthermore, it holds that
\begin{equation}
\E[\bm{\gamma}] = \lim _{i \rightarrow \infty} \frac 1 i \E[\log \left[\bY_{i}\right]_{\ell k}].
\end{equation}
    \end{theorem}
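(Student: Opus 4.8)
\emph{Proof proposal.} The plan is to reduce the claim to the \emph{scalar} subadditive ergodic theorem applied to a submultiplicative matrix norm, and then to transfer the norm-level statement down to the individual entries $[\bY_i]_{\ell k}$ by exploiting the strict positivity of the $\bX_i$ --- this last step is the classical Furstenberg--Kesten argument, and it is the part I expect to require the most care. Throughout I work on the given probability space $(\Omega,\cF,\mathbb{P})$ and use only stationarity of $\{\bX_i\}$ together with the two moment hypotheses.

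First I would fix a submultiplicative norm $\|\cdot\|$ on $K\times K$ matrices --- concretely the maximum absolute row sum, which also obeys $K^{-1}\|M\|\le\max_{\ell,k}|[M]_{\ell k}|\le\|M\|$ --- and set $g_{m,n}\triangleq\log\|\bX_{m+1}\bX_{m+2}\cdots\bX_n\|$ for $0\le m<n$. Submultiplicativity makes the array subadditive, $g_{0,n}\le g_{0,m}+g_{m,n}$, and stationarity of $\{\bX_i\}$ passes to $\{g_{m,n}\}$. The integrability requirements also hold: on the one hand $\log^{+}\|\bX_1\|\le\log K+\sum_{\ell,k}|\log[\bX_1]_{\ell k}|$ is integrable by hypothesis; on the other hand, positivity of the entries gives $\|\bX_{m+1}\cdots\bX_n\|\ge\prod_{j=m+1}^{n}\min_{\ell,k}[\bX_j]_{\ell k}$, so that $\inf_n\tfrac1n\E[g_{0,n}]\ge\E[\log\min_{\ell,k}[\bX_1]_{\ell k}]>-\infty$. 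The scalar subadditive ergodic theorem then supplies a shift-invariant random variable $\bm{\gamma}$ with $\tfrac1i\log\|\bY_i\|\to\bm{\gamma}$ almost surely and in $L^1$, and $\E[\bm{\gamma}]=\lim_i\tfrac1i\E[\log\|\bY_i\|]=\inf_i\tfrac1i\E[\log\|\bY_i\|]$.

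Next I would pass from $\|\bY_i\|$ to the entries. The upper bound is free, since $[\bY_i]_{\ell k}\le\|\bY_i\|$, so $\limsup_i\tfrac1i\log[\bY_i]_{\ell k}\le\bm{\gamma}$. For the matching lower bound I would peel off the two outermost factors: expanding $\bY_i=\bX_1(\bX_2\cdots\bX_{i-1})\bX_i$ entrywise into a sum of positive terms and retaining only the one with the largest middle factor gives $[\bY_i]_{\ell k}\ge K^{-1}\big(\min_{\ell',p}[\bX_1]_{\ell' p}\big)\big(\min_{q,k'}[\bX_i]_{q k'}\big)\,\|\bX_2\cdots\bX_{i-1}\|$. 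Dividing by $i$ and sending $i\to\infty$, the two min-entry factors vanish on the exponential scale --- $\tfrac1i\log\min_{\ell',p}[\bX_1]_{\ell' p}\to0$ because it is a fixed integrable random variable divided by $i$, and $\tfrac1i\log\min_{q,k'}[\bX_i]_{q k'}\to0$ by a Borel--Cantelli argument using $\E|\log[\bX_1]_{\ell k}|<\infty$ and stationarity --- while $\tfrac1i\log\|\bX_2\cdots\bX_{i-1}\|\to\bm{\gamma}$ by the previous paragraph and shift-invariance of $\bm{\gamma}$. Hence $\liminf_i\tfrac1i\log[\bY_i]_{\ell k}\ge\bm{\gamma}$, so $\tfrac1i\log[\bY_i]_{\ell k}\to\bm{\gamma}$ almost surely for every $\ell,k$; in particular the limit exists, is finite, and does not depend on $\ell$ or $k$.

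Finally, for convergence in the mean I would squeeze: combining $[\bY_i]_{\ell k}\le\|\bY_i\|$ with the lower bound above shows that $\tfrac1i\log[\bY_i]_{\ell k}-\tfrac1i\log\|\bY_i\|$ lies between $0$ and $\tfrac1i$ times a sum of integrable random variables (the logarithms of the outermost factors, $\log\|\bX_1\|$, $\log\|\bX_i\|$, and $-\log K$), the latter bound tending to $0$ in $L^1$; together with the $L^1$ convergence of $\tfrac1i\log\|\bY_i\|$ this yields $\tfrac1i\log[\bY_i]_{\ell k}\to\bm{\gamma}$ in $L^1$, whence $\E[\bm{\gamma}]=\lim_i\tfrac1i\E[\log[\bY_i]_{\ell k}]$. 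The step I expect to be delicate is the entry-transfer: making rigorous the slogan that all entries of a long product of strictly positive matrices are mutually comparable up to boundary corrections produced solely by the first and last factor, and checking that those corrections are genuinely $e^{o(i)}$ --- which is exactly where positivity and the finite-logarithmic-moment hypothesis are indispensable. (If one does not wish to invoke the scalar subadditive ergodic theorem as a black box, the first paragraph can itself be established by the standard decomposition of $g_{0,n}$ into nearly independent blocks, but for the present purposes citing it is cleanest.)
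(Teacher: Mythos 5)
The paper offers no proof of this statement at all: Theorem~\ref{thm:Kingman} is imported from Kingman's work as a citation, so there is no internal argument to compare yours against. What you propose is a genuine derivation of the matrix-entry form from the \emph{scalar} subadditive ergodic theorem (essentially the statement the paper itself restates in Appendix~\ref{app:subadditive} as Theorem~\ref{thm:subadditive}), via the classical Furstenberg--Kesten reduction; this is not circular, since the scalar theorem for subadditive processes is a legitimately separate black box, and it is the same mechanism the paper uses elsewhere (Appendix~\ref{app:subadditive} applies it to $-\log\|\bY_i^j\|_-$). Your argument is essentially correct: submultiplicativity of the induced row-sum norm, the integrability checks from the finite log-moments of the entries, the trivial upper bound $[\bY_i]_{\ell k}\le\|\bY_i\|$, the positivity-based lower bound obtained by peeling off the outermost factors, and the Borel--Cantelli step (which indeed requires no independence, only identical distribution) are all sound. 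Two points deserve explicit patching. First, the claim $\frac 1 i\log\|\bX_2\cdots\bX_{i-1}\|\to\bm{\gamma}$ requires knowing that the limit for the shifted product is the \emph{same} random variable; this follows from shift-invariance of the Kingman limit, or more self-containedly from the two-sided sandwich $K^{-1}\big(\min_{\ell,p}[\bX_1]_{\ell p}\big)\big(\min_{q,k}[\bX_i]_{q k}\big)\,\|\bX_2\cdots\bX_{i-1}\|\le\|\bY_i\|\le\|\bX_1\|\,\|\bX_2\cdots\bX_{i-1}\|\,\|\bX_i\|$, which pins the middle block to $\bm{\gamma}$ using only bounds you have already established. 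Second, in the $L^1$ squeeze the integrable correction must also include the logarithms of the minimal entries of $\bX_1$ and $\bX_i$, not only $\log\|\bX_1\|$, $\log\|\bX_i\|$ and $\log K$; this is harmless since those are integrable by hypothesis. Finally, note that under mere stationarity your $\bm{\gamma}$ is a shift-invariant, possibly non-degenerate random variable, which is consistent with the statement as given --- the paper only upgrades it to a constant later (Corollary~\ref{cor:existence_gamma}) using Kolmogorov's zero-one law in the i.i.d.\ setting.
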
\putqed

We now adapt the above theorem to our setting. First of all, observe that the $\{\bR_i\}$ is an i.i.d. sequence, and therefore the $\{A^{\T}\bR_i\}$ is also i.i.d., and hence, stationary. Note that we cannot simply replace the $\bX_i$'s with $A^{\T}\bR_i$'s because the matrix $A^{\T}$ need not have all positive entries. However, $A$ is a primitive matrix and thus there must exist some $n \geq 1$ such that every entry of $A^n$ is strictly positive \cite{Sayed14} and \cite[Chapter 8]{horn2012matrix}. Using this observation, we arrive at the following corollary.
\begin{corollary}[Limit of the random matrix product]\label{cor:existence_gamma}
Consider $\bY_i = \prod_{j = 1}^i (A^{\T}\bR_j)$. Under Assumptions 1 and 2, the finite limit 
\begin{align}
\gamma=\lim _{i \rightarrow \infty} \frac 1 i \log \left[\bY_{i}\right]_{\ell k} = \lim _{i \rightarrow \infty} \frac 1 i \E[\log \left[\bY_{i}\right]_{\ell k}]
\end{align}
exists almost surely, is a constant, and does not depend on $\ell$ or $k$.
\end{corollary}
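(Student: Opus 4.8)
The plan is to reduce Corollary~\ref{cor:existence_gamma} to Theorem~\ref{thm:Kingman} by a blocking argument. The single obstruction is that Theorem~\ref{thm:Kingman} demands entrywise strictly positive matrices, whereas $A^T\bR_j$ only inherits the (possibly sparse) zero pattern of $A^T$. Since $A$ is primitive there is an integer $n\ge 1$ with $(A^T)^n$ entrywise positive \cite[Chapter~8]{horn2012matrix}; I would group the factors of $\bY_i$ into consecutive non-overlapping blocks of length $n$,
\[
\bX_m \triangleq \prod_{j=(m-1)n+1}^{mn}\big(A^T\bR_j\big),\qquad m\ge 1,
\]
so that $\prod_{s=1}^{m}\bX_s=\bY_{mn}$. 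Because each $\bR_j$ is diagonal with almost surely strictly positive diagonal entries $\br_{k,j}$, left- or right-multiplication by $\bR_j$ does not alter the zero pattern of a matrix; hence $\bX_m$ has exactly the zero pattern of $(A^T)^n$, i.e.\ all its entries are a.s.\ strictly positive. Since $\{\bR_j\}$ is i.i.d., the blocks $\{\bX_m\}_{m\ge 1}$ are i.i.d.\ and in particular stationary.

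Next I would check the finite-log-expectation hypothesis for $\{\bX_m\}$, namely $\E\big|\log[\bX_m]_{\ell k}\big|<\infty$. Each entry $[\bX_m]_{\ell k}$ is a finite sum, over length-$n$ paths in the graph of $A$, of products of fixed entries of $A^T$ and of the variables $\br_{a,j}$ with $(m-1)n<j\le mn$; consequently $[\bX_m]_{\ell k}$ is squeezed between a positive constant times $\prod_j\min_a\br_{a,j}$ and a constant times $\prod_j\max_a\br_{a,j}$, which reduces the claim to $\E|\log\br_{a,j}|<\infty$. This in turn holds because $\E[-\log\br_{a,j}]=D(L_a(\cdot|\theta^\circ)||L_a(\cdot|\theta))<\infty$ by Assumption~1, while $\E[(\log\br_{a,j})^+]\le\E[(\br_{a,j}-1)^+]\le\E[\br_{a,j}]\le 1$ by the elementary bound $\log x\le x-1$ together with $\E[\br_{a,j}]\le 1$; adding these controls both the positive and the negative part of $\log\br_{a,j}$.

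With both hypotheses verified, Theorem~\ref{thm:Kingman} applied to $\{\bX_m\}$ gives an a.s.\ finite limit $\gamma'=\lim_{m\to\infty}\frac1m\log[\bY_{mn}]_{\ell k}=\lim_{m\to\infty}\frac1m\E\big[\log[\bY_{mn}]_{\ell k}\big]$, independent of $\ell,k$; I set $\gamma\triangleq\gamma'/n$, so that $\frac1{mn}\log[\bY_{mn}]_{\ell k}\to\gamma$ along the subsequence $i=mn$. That $\gamma$ is an a.s.\ constant is because $\gamma'$ is measurable with respect to the tail $\sigma$-algebra of the i.i.d.\ sequence $\{\bR_j\}$ (changing finitely many $\bR_j$ leaves the limit unchanged), so Kolmogorov's $0$--$1$ law forces it to be deterministic; the in-mean statement is supplied by Theorem~\ref{thm:Kingman} along the same subsequence.

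The last step --- upgrading the subsequence $i=mn$ to all $i$ --- is the one I expect to be most delicate. For $i=mn+t$ with $0\le t<n$, I would factor $\bY_i=\bY_{mn}\bZ_{m,t}$ with $\bZ_{m,t}\triangleq\prod_{j=mn+1}^{mn+t}(A^T\bR_j)$ a product of at most $n-1$ factors. Since $A$ is irreducible, $A^T$ sends every nonnegative nonzero vector to a nonnegative nonzero vector, so each column of $\bZ_{m,t}$ is nonzero; fixing (deterministically, in terms of $k$ and $t$) an index $p_0$ with $[\bZ_{m,t}]_{p_0 k}>0$ yields $[\bY_i]_{\ell k}\ge[\bY_{mn}]_{\ell p_0}\,[\bZ_{m,t}]_{p_0 k}$, while $[\bY_i]_{\ell k}\le K\big(\max_p[\bY_{mn}]_{\ell p}\big)\big(\max_p[\bZ_{m,t}]_{pk}\big)$ is the matching upper bound. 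Bounding the entries of $\bZ_{m,t}$ between a constant times $\prod_{j=mn+1}^{mn+t}\min_a\br_{a,j}$ and a constant times $\prod_{j=mn+1}^{mn+t}\max_a\br_{a,j}$, and invoking that for any i.i.d.\ integrable sequence $\{Z_j\}$ one has $\frac1i\sum_{j=i-n+1}^{i}Z_j\to 0$ a.s.\ (apply the strong law to $\tfrac1i S_i$ and $\tfrac1i S_{i-n}$), together with $\frac1i\log[\bY_{mn}]_{\ell p}\to\gamma$ for every $p$ (each entry converges and $mn/i\to1$), I would sandwich $\frac1i\log[\bY_i]_{\ell k}$ between two sequences tending to $\gamma$. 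This establishes the a.s.\ limit, and it does not depend on $\ell$ or $k$. The corresponding convergence of $\frac1i\E[\log[\bY_i]_{\ell k}]$ follows from the same sandwich plus uniform integrability of $\{\tfrac1i\log[\bY_i]_{\ell k}\}_{i\ge n}$ --- itself a consequence of the $L^1$ strong law applied to the dominating i.i.d.\ averages $\tfrac1i\sum_{j\le i}\sum_a|\log\br_{a,j}|$ --- which promotes the a.s.\ convergence to convergence in mean. In short, the only real work beyond citing Kingman's theorem is the blocking construction and the careful treatment of the $t<n$ leftover factors.
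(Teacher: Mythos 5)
Your proof is correct and follows essentially the same route as the paper's: blocking the factors into length-$n$ groups using primitivity of $A$, checking that the blocks are entrywise positive with finite log-moments via Assumption~1, applying Kingman's subadditive ergodic theorem, and invoking Kolmogorov's zero--one law to conclude the limit is a constant. The only difference is that you explicitly carry out the passage from the subsequence $i=mn$ to all $i$ (and the uniform-integrability step for the convergence of expectations), which the paper asserts without detail, so your write-up is a more complete version of the same argument rather than a different one.
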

\begin{proof}
See Appendix \ref{app:existence_gamma}.
\end{proof}

Using Corollary \ref{cor:existence_gamma}, we are able to characterize the asymptotic decay rate for AA-diffusion and conclude this section.
\begin{theorem}[Asymptotic decay rate of AA-diffusion]\label{thm:existence} For any agent $k$, it holds almost surely:
\begin{align}\label{eq:asymp_rate}
  \brho_{k}^{(\AA)}(\theta) = -\lim_{i \to \infty} \frac 1 i \log \bmu_{k,i}(\theta) = -\gamma(\theta).
\end{align}
\end{theorem}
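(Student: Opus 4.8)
The plan is to sandwich the trajectory $\bmu_{k,i}(\theta)$ between two random linear recursions whose exponential growth rates are both governed by the constant $\gamma(\theta)$ of Corollary~\ref{cor:existence_gamma}, up to an error that vanishes as the slack $\epsilon\downarrow 0$. The upper bound is essentially already in place: on $\cG(\epsilon)$, which has probability $1$ by Theorem~\ref{thm:Ali}, we have $\bmu_{k,i}(\theta)\le\bnu_{k,i}(\theta)$ for all $i\ge\bm{i}_0(\omega)$, and iterating \eqref{eq:matrix_form} gives $\bnu_i=(1-\epsilon)^{-(i-\bm{i}_0)}\big(\prod_{j=\bm{i}_0+1}^{i}A^{T}\bR_j\big)\bmu_{\bm{i}_0}$. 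Since $0\le\bmu_{\bm{i}_0}\le\mathds{1}_{K}$ entrywise and $A^{T}\bR_j$ has nonnegative entries, $[\bnu_i]_k\le K(1-\epsilon)^{-(i-\bm{i}_0)}\max_{\ell}\big[\prod_{j=\bm{i}_0+1}^{i}A^{T}\bR_j\big]_{\ell k}$. Dividing by $i$, taking logarithms, and letting $i\to\infty$, the prefactor contributes $-\log(1-\epsilon)$ while the matrix product contributes $\gamma(\theta)$ by Corollary~\ref{cor:existence_gamma}, so $\limsup_i\frac{1}{i}\log\bmu_{k,i}(\theta)\le\gamma(\theta)-\log(1-\epsilon)$ almost surely. (To use the corollary with the random index $\bm{i}_0$, I would observe that for each fixed integer $m$ the convergence $\frac1i\log[\prod_{j=m+1}^{m+i}A^{T}\bR_j]_{\ell k}\to\gamma(\theta)$ holds on a probability-one set, by stationarity of $\{\bR_j\}$, and intersect over $m\in\mathbb{N}$; this measurability bookkeeping is routine.)

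For the lower bound I would build a matching extremal process from below. Using \eqref{eq:dif_adapt_step}, write $\bpsi_{\ell,i}(\theta)=\br_{\ell,i}(\theta)\,\bmu_{\ell,i-1}(\theta)\big/\sum_{\theta'}\br_{\ell,i}(\theta')\bmu_{\ell,i-1}(\theta')$, and recall $\br_{\ell,i}(\theta^\circ)=1$. On $\cG(\epsilon)$ and for $i>\bm{i}_0$ we have $\bmu_{\ell,i-1}(\theta^\circ)\le 1$ and $\sum_{\theta'\neq\theta^\circ}\bmu_{\ell,i-1}(\theta')\le\epsilon$, so the denominator is at most $1+\epsilon\,\overline{\bm{r}}_{\ell,i}$ with $\overline{\bm{r}}_{\ell,i}\triangleq\max_{\theta'}\br_{\ell,i}(\theta')$. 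Hence, by \eqref{eq:linear_fusion}, $\bmu_{k,i}(\theta)\ge\sum_{\ell\in\cN_k}a_{\ell k}\,\frac{\br_{\ell,i}(\theta)}{1+\epsilon\,\overline{\bm{r}}_{\ell,i}}\,\bmu_{\ell,i-1}(\theta)$. I then define $\{\underline{\bnu}_{k,i}\}$ by the corresponding \emph{equality} with $\underline{\bnu}_{k,\bm{i}_0}=\bmu_{k,\bm{i}_0}$; since all coefficients are nonnegative, an induction gives $\bmu_{k,i}(\theta)\ge\underline{\bnu}_{k,i}$ for $i\ge\bm{i}_0$ on $\cG(\epsilon)$. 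In matrix form $\underline{\bnu}_i=(A^{T}\bR_i\bm{D}_i)\underline{\bnu}_{i-1}$ with $\bm{D}_i=\mathrm{diag}\big((1+\epsilon\,\overline{\bm{r}}_{k,i})^{-1}\big)_{k}\preceq I$. Since $A^{T}\bR_i\bm{D}_i\succeq\big(\min_k[\bm{D}_i]_{kk}\big)A^{T}\bR_i$ entrywise and entrywise domination of nonnegative matrices is preserved under products, $\big[\prod_{j=\bm{i}_0+1}^{i}A^{T}\bR_j\bm{D}_j\big]_{\ell k}\ge\big(\prod_{j=\bm{i}_0+1}^{i}\min_k[\bm{D}_j]_{kk}\big)\big[\prod_{j=\bm{i}_0+1}^{i}A^{T}\bR_j\big]_{\ell k}$.

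Now take $\frac1i\log$, choose $\ell=\ell_0$ with $\bmu_{\ell_0,\bm{i}_0}(\theta)>0$ (such $\ell_0$ exists since positive beliefs on $\theta$ propagate through the strongly connected network), apply Corollary~\ref{cor:existence_gamma} to the second product, and apply the strong law of large numbers to $\frac1i\sum_{j}\log\min_k[\bm{D}_j]_{kk}\to\E\big[\log\min_k(1+\epsilon\,\overline{\bm{r}}_{k,1})^{-1}\big]=:-\delta(\epsilon)$. This yields $\liminf_i\frac1i\log\bmu_{k,i}(\theta)\ge\gamma(\theta)-\delta(\epsilon)$ almost surely. The integrability needed here (which also keeps intact the hypotheses of Corollary~\ref{cor:existence_gamma}) follows from $\log^{+}\overline{\bm{r}}_{k,1}\le\sum_{\theta'}\log^{+}\br_{k,1}(\theta')$ and $\E[\log^{+}\br_{k,1}(\theta')]\le\E[\br_{k,1}(\theta')]=1$; the same estimate bounds $0\le\delta(\epsilon)\le\E[\max_k\log(1+\overline{\bm{r}}_{k,1})]<\infty$ for $\epsilon\le 1$, and dominated convergence gives $\delta(\epsilon)\downarrow 0$ as $\epsilon\downarrow 0$. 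Finally, fixing a sequence $\epsilon_n\downarrow 0$ and intersecting the countably many probability-one events used above, on the resulting probability-one event we have for every $n$ that $\gamma(\theta)-\delta(\epsilon_n)\le\liminf_i\frac1i\log\bmu_{k,i}(\theta)\le\limsup_i\frac1i\log\bmu_{k,i}(\theta)\le\gamma(\theta)-\log(1-\epsilon_n)$; letting $n\to\infty$ forces $\lim_i\frac1i\log\bmu_{k,i}(\theta)=\gamma(\theta)$, which is \eqref{eq:asymp_rate}. The main obstacle is exactly this lower bound: unlike GA-diffusion there is no exact linear recursion for $\log\bmu_{k,i}(\theta)$, so the Bayesian normalization must be absorbed into the controllable multiplicative factor $1+\epsilon\,\overline{\bm{r}}_{\ell,i}$, and one must then show, via entrywise monotonicity of matrix products and the integrability bound on $\overline{\bm{r}}$, that this factor's contribution to the growth rate vanishes as $\epsilon\to 0$.
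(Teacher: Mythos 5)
Your argument is correct and follows essentially the same route as the paper: the upper bound via the extremal process $\{\bnu_{k,i}\}$ on $\cG(\epsilon)$ together with Corollary~\ref{cor:existence_gamma}, and a matching lower extremal process in which the Bayesian normalization is absorbed into an $\epsilon$-controlled multiplicative correction whose contribution to the rate vanishes as $\epsilon \downarrow 0$. The only difference is bookkeeping: the paper bounds the denominator uniformly over agents by $1+\epsilon\sum_{\theta'\neq\theta^\circ}\sum_{\ell}\br_{\ell,i}(\theta')$ so the lower process is driven exactly by $A^T\bR_i$ and the error $\epsilon K(H-1)$ follows from $\log(1+x)\leq x$ and the law of large numbers, whereas you keep a per-agent diagonal correction and control it via entrywise monotonicity and dominated convergence --- both yield the same sandwich and the same conclusion.
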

\begin{proof}
See Appendix \ref{app:existence}. 
\end{proof}

\noindent\textbf{Remark:} A weaker result obtained in \cite{jadbabaie2013information} ensures $\brho \geq -\gamma$, with $\brho$ defined in \eqref{eq:TV_regret}. It can be verified that this is an implication of Theorem \ref{thm:existence}.\putqed \vspace{-1em}\\

Theorem \ref{thm:existence} states that if AA-diffusion is executed, the beliefs on a false hypothesis $\theta$ decay exponentially almost surely, and the decay rate is constant and is the same among all agents. Note, however, that the decay rate may vary across $\theta$. This is because $\gamma = \gamma(\theta)$ is the limit pertaining to the i.i.d. products of the matrices $A^{\T}\bR_i(\theta)$, where $\bR_i(\theta)$ is defined in terms of the $\br_{k,i}(\theta)$ --- see \eqref{eq:nu_process}.

\section{\textbf{Bounds on the Asymptotic Decay Rate}}\label{sec:bounds}
It is stated in \cite{Kingman} that ``pride of place among the unsolved problems of subadditive ergodic theory must go to the calculation of the constant $\gamma$". To the best of our knowledge, no standard machinery exists to date for this end. Therefore, we make use of the special structure of the matrix $A^{\T}\bR_i$ to obtain bounds for $\gamma$ in Sections~\ref{sec:distributed_inf_case_bounds} and \ref{sec:special_cases_exact}. But before those, we first provide some simple upper and lower bounds that hold for the general cases.
\subsection{Bounds Based on Subadditivity}\label{sec:subadditive}
First, it is useful to discuss why products of random matrices are related to subadditive processes. Let $\|X\|$ denote any matrix norm that is submultiplicative, i.e., for any $X$ and $Y$
\begin{align}
    \|XY\| \leq \|X\|\|Y\|.
\end{align}
For our problem at hand, replacing $X$, $Y$ with $\bY_i^j$, $\bY_j^m$ whose general definition is given as
\begin{equation}\label{eq:matrix_product_main}
\bY_{i}^m \triangleq  \prod_{t = i+1}^m (A^{\T}\bR_t),
\end{equation}
and taking the logarithms of the both sides yield the subadditive relation
\begin{align}
    \log\|\bY_i^m\| \leq \log \|\bY_i^j\| + \log \|\bY_j^m\|.
\end{align}
A well-known property of subadditive functions \cite{Gallager} is that
\begin{align}\label{eq:subadditive}
    \lim_{i \to \infty} \frac 1 i \log\|\bY_i\| = \inf_i \frac 1 i \log\|\bY_i\|,
\end{align}
where \(\bY_i\) is defined in \eqref{eq:Yi_def}. In fact, this observation is the starting point of the work \cite{Furstenberg} on random matrix products. Now, consider the norm of matrices with non-negative entries: $\|X\|_1 \triangleq \max_{\ell} \sum_{k} [X]_{k\ell}$, which is submultiplicative. It is then imminent from Corollary \ref{cor:existence_gamma} and \eqref{eq:subadditive} that for any $j$
\begin{align}\label{eq:lemma2}
    \gamma &= \lim_{i \to \infty} \frac 1 i \E\big[\log[\bY_i]_{11}\big] \leq \lim_{i \to \infty} \frac 1 i \E\big[\log\|\bY_i\|_1\big] \notag \\ 
     &=\inf_i \frac 1 i \E\big[\log\|\bY_i\|_1\big] \leq \frac 1 j \E\big[\log\|\bY_j\|_1\big],
\end{align}
which yields an upper bound for $\gamma$.

For the lower bound, we aim to create a supermultiplicative process. Let $\|X\|_- \triangleq \min_{\ell} \sum_k [X]_{k \ell} $ be the minimum column sum of the matrix $X$. Note that this is not a norm. However, it is supermultiplicative for non-negative matrices, i.e., 
\begin{align}
    \|XY\|_- \geq \|X\|_-\|Y\|_-.
\end{align}
We then obtain the following result.
\begin{lemma}[Bounds based on subadditivity]\label{lem:minrowsum}
For any $i$, $j \geq 1$,
\begin{align}
   \frac 1 i \E\big[\log\|\bY_i\|_-\big] \leq \gamma \leq \frac 1 j \E\big[\log\|\bY_j\|_1\big].
\end{align}
\end{lemma}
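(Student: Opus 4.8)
Half of the statement is already in hand: the upper bound $\gamma\le\tfrac1j\E[\log\|\bY_j\|_1]$ is exactly the last line of the display~\eqref{eq:lemma2}. The task is therefore the lower bound $\tfrac1i\E[\log\|\bY_i\|_-]\le\gamma$, and the plan is to rerun the argument of~\eqref{eq:lemma2} with the supermultiplicative functional $\|\cdot\|_-$ (shown above to satisfy $\|XY\|_-\ge\|X\|_-\|Y\|_-$ on non\nobreakdash-negative matrices) playing the role previously played by the submultiplicative norm $\|\cdot\|_1$.

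\emph{Step 1: superadditivity of the means.} Every matrix occurring is entrywise non\nobreakdash-negative; moreover, under Assumption~1 the likelihood ratios $\br_{k,j}$ are strictly positive almost surely, so $\bR_j$ is a positive diagonal matrix, and since $A$ is primitive it has no zero row, whence $\|A^T\bR_j\|_-=\min_\ell\big(\br_{\ell,j}\sum_k A_{\ell k}\big)>0$ a.s. Writing $\bY_{i+m}$ as a product of two independent blocks distributed as $\bY_i$ and $\bY_m$, applying $\|\cdot\|_-$ and then taking logarithms and expectations shows that $a_i\triangleq\E[\log\|\bY_i\|_-]$ is superadditive, $a_{i+m}\ge a_i+a_m$ (the $a_i$ are finite by the integrability recorded in Step~2). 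By the superadditive counterpart of the Fekete property invoked in~\eqref{eq:subadditive}, the sequence $\tfrac1i a_i$ is nondecreasing with $\tfrac1i a_i\le\lim_i\tfrac1i a_i=\sup_i\tfrac1i a_i$, so it suffices to show $\lim_i\tfrac1i\E[\log\|\bY_i\|_-]\le\gamma$.

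\emph{Step 2: identifying the rate.} For all sufficiently large $i$ every entry of $\bY_i$ is a.s.\ strictly positive, and
\begin{equation*}
\min_{k,\ell}[\bY_i]_{k\ell}\ \le\ \|\bY_i\|_-\ \le\ K\max_{k,\ell}[\bY_i]_{k\ell}.
\end{equation*}
Since $\tfrac1i\log[\bY_i]_{k\ell}\to\gamma$ a.s.\ for each of the finitely many pairs $(k,\ell)$ by Corollary~\ref{cor:existence_gamma}, squeezing gives $\tfrac1i\log\|\bY_i\|_-\to\gamma$ almost surely. To pass to expectations I would use the pathwise bound $\log\|\bY_i\|_-\le\log\|\bY_i\|_1\le\sum_{j=1}^i\log\|A^T\bR_j\|_1$ coming from submultiplicativity of $\|\cdot\|_1$: the non\nobreakdash-negative quantities $\tfrac1i\sum_{j=1}^i\log\|A^T\bR_j\|_1-\tfrac1i\log\|\bY_i\|_-$ converge a.s.\ to the constant $\E[\log\|A^T\bR_1\|_1]-\gamma$ by the strong law of large numbers, while their expectations equal $\E[\log\|A^T\bR_1\|_1]-\tfrac1i a_i$; Fatou's lemma then forces $\lim_i\tfrac1i a_i\le\gamma$, and together with Step~1 this gives $\tfrac1i\E[\log\|\bY_i\|_-]\le\gamma$ for every $i$. (Alternatively, one may apply Kingman's subadditive ergodic theorem in its general form to the stationary subadditive cocycle $\{-\log\|\bY_i\|_-\}$, upgrading the almost sure limit to an $L^1$ limit in one step.) The integrability needed is that the log\nobreakdash-likelihood ratio $-\log\br_{\ell,j}=\bL_{\ell,j}(\theta)$ has $\E[\bL_{\ell,j}(\theta)^-]\le\E[\br_{\ell,j}]\le1$ (from $\log x\le x$) and $\E[\bL_{\ell,j}(\theta)^+]=D\big(L_\ell(\cdot|\theta^\circ)\,\|\,L_\ell(\cdot|\theta)\big)+\E[\bL_{\ell,j}(\theta)^-]<\infty$ by Assumption~1, so $\E\big|\log\br_{\ell,j}\big|<\infty$; this in turn makes the $a_i$, as well as $\E\big|\log\|A^T\bR_j\|_1\big|$ and $\E\big|\log\|A^T\bR_j\|_-\big|$, finite.

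\emph{Main obstacle.} The column\nobreakdash-sum inequalities, the squeeze, and the Fekete monotonicity are mechanical. The delicate point --- the one I expect to dominate the proof --- is the passage from the almost sure convergence $\tfrac1i\log\|\bY_i\|_-\to\gamma$ to convergence of the expectations: one has to exclude probability mass escaping to $-\infty$ in the lower tail of $\log\|\bY_i\|_-$. This is exactly what integrability of the log\nobreakdash-likelihood ratios provides (equivalently, it is the integrability hypothesis of Kingman's theorem, or the control furnished by the $L^1$ law of large numbers), and it is why the lower bound --- unlike the nearly immediate upper bound --- must use Assumption~1 through genuine integrability of the LLRs rather than merely through finiteness of a KL divergence.
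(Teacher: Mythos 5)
Your proof is correct, and its skeleton is the same as the paper's: upper bound read off from \eqref{eq:lemma2}, and for the lower bound, superadditivity of $\E[\log\|\bY_i\|_-]$ (Fekete) reducing everything to showing that $\lim_i \tfrac 1 i \E[\log\|\bY_i\|_-]=\gamma$. Where you diverge is in how the limit and expectation are interchanged. The paper applies the general form of Kingman's subadditive ergodic theorem to the subadditive process $-\log\|\bY_i\|_-$, which yields convergence in the mean directly; the condition it must verify (condition (iii) of Theorem \ref{thm:subadditive}) is checked in Appendix \ref{app:subadditive} via exactly the $\|\cdot\|_1$ envelope $\E[\log\|\bY_i\|_-]\leq\E[\log\|\bY_i\|_1]\leq i\,\E[\log\|\bY_1\|_1]$ that you also use. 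Your primary route instead identifies the almost sure limit by the entrywise squeeze $\min_{k,\ell}[\bY_i]_{k\ell}\leq\|\bY_i\|_-\leq K\max_{k,\ell}[\bY_i]_{k\ell}$ together with Corollary \ref{cor:existence_gamma}, and then passes to expectations by Fatou applied to the nonnegative differences $\tfrac1i\sum_j\log\|A^T\bR_j\|_1-\tfrac1i\log\|\bY_i\|_-$, with the SLLN handling the envelope term; you note the Kingman route only parenthetically. The Fatou/SLLN argument is slightly more elementary (it avoids invoking the $L^1$ part of the ergodic theorem), at the cost of carrying the integrability bookkeeping $\E|\log\br_{k,1}|<\infty$ explicitly --- which you do correctly via $\E[(\log\br_{k,1})^+]\leq\E[\br_{k,1}]=1$ and Assumption 1; the paper's route concentrates that bookkeeping in the appendix verification. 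You also make explicit two points the paper leaves implicit: that $\|\bY_i\|_->0$ almost surely for every $i$ (so the logarithms are well defined), and that the almost sure limit of $\tfrac1i\log\|\bY_i\|_-$ really is $\gamma$ via the squeeze.
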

\begin{proof}
The upper bound follows directly from \eqref{eq:lemma2}. For the lower bound, observe that $-\log\|\bY_i\|_-$ is a subadditive process. We then have
\begin{align}
    \sup_i \frac 1 i \E\big[\log\|\bY_i\|_-\big] & = \lim_{i \to \infty} \frac 1 i \E\big[\log\|\bY_i\|_-\big] \notag \\
    &= \E\bigg[\lim_{i \to \infty} \frac 1 i\log\|\bY_i\|_-\bigg] = \gamma.
\end{align}
where the interchange of the limit and expectation operations is due to the general form of the subadditive ergodic theorem --- see Appendix \ref{app:subadditive} for more detail. 
\end{proof} 
We point out that $\|\cdot\|_- $ and $\|\cdot\|_1$ can be replaced with any suitable functions $f$, $g$ on positive matrices that are super/submultiplicative respectively; and satisfy the conditions of the subadditive ergodic theorem. This way, one obtains lower and upper bounds for the rate of AA-diffusion with a method similar to what we did in Lemma \ref{lem:minrowsum}.

In light of the above observation, we remark that $\|.\|_-$ can also be replaced with an element of $\bY_i$, e.g., with any mapping $\bY_i\mapsto [\bY_i]_{k k}$, as long as it remains strictly positive for all $i$ --- this is to ensure that the logarithm remains finite. It can be verified that $[\bY_i]_{k k}$ is supermultiplicative as long as $\bY_i$'s are non-negative. Recall that at least one agent has a self-loop, which implies $a_{kk} > 0$ for some $k$. Without loss of generality, assume $k =1$. The self-loop assumption then ensures that $[\bY_i]_{11}$ remains strictly positive. Then, according to Lemma \ref{lem:minrowsum}, we have 
\begin{equation}
\E[\log [\bY_1]_{11}] = \log a_{11} + \E[\log{\br_{1,1}}] \leq \gamma
\end{equation}
which, by Theorem~\ref{thm:existence}, implies that
\begin{align}
\rho^{(\AA)} &\leq -\log a_{11} - \E[\log{\br_{1,1}}]\notag\\
&=-\log a_{11} + D(L_1(\cdot|\theta^\circ)||L_1(\cdot|\theta)).\label{eq:single_element}
\end{align}
The inequality \eqref{eq:single_element} gives rise to an interesting observation of the learning model under AA-diffusion. For a small $\delta >0$, suppose that $a_{11}=1-\delta$ is close to one and $D(L_1(\cdot|\theta^\circ)||L_1(\cdot\theta))\leq \delta$ is small. This suggests that agent 1 is highly self-confident despite limited learning abilities. For such special case, observe that
\begin{align}
\rho^{(\AA)} &\leq -\log (1-\delta) + \delta  \leq  \delta/(1-\delta) + \delta
\end{align}
is also small. Therefore, since all agents learn the truth at the same rate $\rho^{(\AA)} $ by Theorem \ref{thm:existence}, the inept and self-confident agent 1 drastically decreases the learning ability of the whole network. This phenomenon can be avoided under GA-diffusion: if the remaining agents do not trust agent 1, i.e., if $a_{1k}$'s are small, then the first element of the Perron vector, $\pi_1$, can be kept small as well. It is evident from \eqref{eq:rho_def} that such isolation of agent 1 from the network will preserve (and might even boost) the learning rate of the remaining agents. This observation is consistent with the numerical results in Section \ref{sec:numerical}.

\subsection{Bounds for the Special Case of Distributed Inference}\label{sec:distributed_inf_case_bounds}
In this section, we take advantage of the special structure of $\{\bY_i\}$ in our distributed setting and derive bounds for $\gamma$. To that end, observe first that
\begin{align}
[\bY_i]_{1k} = \br_{k,i}\sum_\ell [\bY_{i-1}]_{1\ell} a_{k \ell}.
\end{align}
This implies that
\begin{align}
\log[\bY_i]_{1k} = \log\br_{k,i}+\log\bigg(\sum_\ell [\bY_{i-1}]_{1\ell} a_{k \ell}\bigg),
\end{align}
and averaging over the network with weights $\pi_k$ yields that
\begin{equation}
\sum_k\pi_k\log\frac{[\bY_i]_{1k}}{\pi_k} = \sum_k\pi_k\log\br_{k,i} +\sum_k\pi_k\log\frac{\sum_\ell [\bY_{i-1}]_{1\ell} a_{k \ell}}{\pi_k}.
\end{equation}
We subtract $\sum_k\pi_k\log\dfrac{[\bY_{i-1}]_{1k}}{\pi_k}$ from both sides of this equation to obtain
\begin{align}
\sum_k&\pi_k\log\frac{[\bY_i]_{1k}}{\pi_k}-\sum_k\pi_k\log\frac{[\bY_{i-1}]_{1k}}{\pi_k}\notag\\\label{eq:difference}
&= \sum_k\pi_k\log\br_{k,i}+\sum_k\pi_k\log\frac{\sum_\ell [\bY_{i-1}]_{1\ell} a_{k \ell}}{[\bY_{i-1}]_{1k}}.
\end{align}
If we take the time average of \eqref{eq:difference} from $j=2$ to $i$, the left-hand side becomes a telescoping sum where the intermediate terms cancel each other, and we arrive at the following relation: 
\begin{align}
&\frac 1 i \sum_k\pi_k\log\frac{[\bY_i]_{1k}}{\pi_k}-\frac 1 i\sum_k\pi_k\log\frac{[\bY_1]_{1k}}{\pi_k}\notag\\
&= \frac 1 i \sum_{j=2}^i\big(\sum_k\pi_k\log\br_{k,j}\big)+\frac 1 i \sum_{j=2}^i\sum_k\pi_k\log\frac{\sum_\ell [\bY_{j-1}]_{1\ell} a_{k \ell}}{[\bY_{j-1}]_{1k}}.
\end{align}
Assume $[\bY_1]_{1k} > 0$ for simplicity. Note that there is no loss of generality here since the strong connectivity assumption on the network ensures the primitiveness of the combination matrix $A$ which in turn ensures $[\bY_1]_{1k}$ to be strictly positive eventually--- see Appendix~\ref{app:existence_gamma}. The left-hand side tends to $\gamma$ by Theorem \ref{thm:existence}, and since $\br_{k,i}$'s are i.i.d., the first term on the right-hand side tends to its mean by the law of large numbers and is equal to the negative of the decay rate of GA-diffusion --- see \eqref{eq:LLF_rate}, \eqref{eq:rho_def}. In other words,
\begin{align}
\gamma = \sum_k\pi_k\E[\log\br_{k,j}]+\lim_{i \to \infty}\frac 1 i \sum_{j=2}^i\sum_k\pi_k\log\frac{\sum_\ell [\bY_{j-1}]_{1\ell} a_{k \ell}}{[\bY_{j-1}]_{1k}}
\end{align}
and accordingly,
\begin{align}
\rho^{{(\GA)}} - \rho^{(\AA)} \!= \!\lim_{i \to \infty}\frac 1 i \sum_{j=2}^i\sum_k\pi_k\log\frac{\sum_\ell [\bY_{j-1}]_{1\ell} a_{k \ell}}{[\bY_{j-1}]_{1k}}
\end{align}
The above equation quantifies the gap between the decay rates. The gap turns out to be the difference of two KL divergences averaged over time. To see this, let $\bu$ denote the probability vector obtained by normalization of the first row of $\bY$, i.e.,
\begin{align}
    [\bu_i]_k \triangleq \frac{[\bY_i]_{1k}}{\sum_\ell [\bY_i]_{1\ell}}.
\end{align}
Then, it follows that
\begin{equation}\label{eq:gamma_exact}
\rho^{{(\GA)}} \!- \rho^{(\AA)} \!=\!\!\lim_{i \to \infty}\frac 1 i \!\sum_{j=2}^i\!\!\big[D(\pi||\bu_{j-1}) - D(\pi||A\bu_{j-1})\big]
\end{equation}
Equation \eqref{eq:gamma_exact} has an interesting interpretation. Note that $A\bu_{j-1}$ is another probability vector that is obtained by passing $\bu_{j-1}$ through the Markov matrix $A$. Furthermore, $A\pi = \pi$ is the unique invariant distribution of the kernel $A$ and thus the difference term in \eqref{eq:gamma_exact} is equal to
\begin{align}
  D(\pi||\bu_{j-1}) - D(A\pi||A\bu_{j-1}).
\end{align}
The well-known data processing theorem \cite{Gallager} ensures that this difference is non-negative. Hence,
\begin{proposition}\label{prop:geometric_better}
$\rho^{(\GA)} \geq \rho^{(\AA)}$ almost surely.\putqed
\end{proposition}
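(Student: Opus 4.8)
The plan is to argue directly from the exact expression for the gap derived immediately before the statement, namely
\begin{align}
\rho^{{(\GA)}} - \rho^{(\AA)} =\lim_{i \to \infty}\frac 1 i \sum_{j=2}^i\big[D(\pi||\bu_{j-1}) - D(\pi||A\bu_{j-1})\big],
\end{align}
and to show that each summand is non-negative along every sample path (at least eventually in $j$), so that the Cesàro limit is non-negative. Note that this limit exists almost surely, being the difference of the two almost-surely-existing limits $\rho^{(\GA)}$ of \eqref{eq:rho_def} and $\rho^{(\AA)}$ of Theorem \ref{thm:existence}; hence it suffices to bound the partial averages from below by zero.

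First I would rewrite the $j$-th summand using the invariance $A\pi = \pi$,
\begin{align}
D(\pi||\bu_{j-1}) - D(\pi||A\bu_{j-1}) = D(\pi||\bu_{j-1}) - D(A\pi||A\bu_{j-1}).
\end{align}
Both $\pi$ and $\bu_{j-1}$ are genuine probability vectors --- the latter by construction, as the normalized first row of $\bY_{j-1}$ --- and $A$ acts on them as a Markov transition kernel. The data processing theorem for the KL divergence (applying one and the same stochastic map to both arguments can only decrease the divergence) then yields $D(A\pi||A\bu_{j-1}) \le D(\pi||\bu_{j-1})$, i.e.\ the summand is $\ge 0$. Summing over $j$, dividing by $i$, and letting $i\to\infty$ gives $\rho^{(\GA)} - \rho^{(\AA)} \ge 0$ almost surely.

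The only technical point to watch is finiteness of the individual terms: $D(\pi||\bu_{j-1})$ is finite exactly when $\bu_{j-1}$ has full support, equivalently when the first row of $\bY_{j-1}$ is entrywise strictly positive. Since $A$ is primitive there is an $n_0$ with $A^{n_0}$ entrywise positive, and because each $\bR_j$ is diagonal with strictly positive diagonal entries, one gets $[\bY_{j-1}]_{1k} > 0$ for all $k$ once $j-1 \ge n_0$; thus all but finitely many summands are finite and non-negative, which is all the Cesàro average needs. I do not anticipate a genuine obstacle here: once the exact gap formula \eqref{eq:gamma_exact} is in hand, the proposition is a term-by-term invocation of the data processing inequality, with the contraction of KL divergence under the Markov kernel $A$ as the single substantive ingredient; the rest is bookkeeping --- checking that $\bu_{j-1}$ is a probability vector, that $A\pi=\pi$, and that the finitely many possibly-degenerate early terms are harmless in the limit.
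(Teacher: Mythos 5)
Your argument is exactly the paper's: the gap formula \eqref{eq:gamma_exact} combined with $A\pi=\pi$ and the data processing inequality for the KL divergence under the Markov kernel $A$ gives non-negativity of each summand, hence of the limit. The extra remark on positivity of the first row of $\bY_{j-1}$ (via primitivity of $A$ and strict positivity of the $\br_{k,i}$) is a fine bit of bookkeeping that the paper leaves implicit, but the substance of the proof is the same.
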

 The only case where the limit in \eqref{eq:gamma_exact} --- the performance gap --- tends to zero is when $D(\pi||\bu_{i}) \to 0$, or simply when $\bu_i \to \pi$. We will show that if the network has sufficient connectivity, this can only happen when all agents receive the same data, i.e., $\br_{k,i} = \br_{\ell,i}$ for all agents $k$ and $\ell$. It is obvious that in this case most fusion methods, and in particular AA and GA, are equivalent. In fact, there is no need for an agent to communicate with its neighbors --- every neighbor is equivalent to the agent itself.

If the network is connected enough, and when the agents do not observe the same data, $\rho^{(\GA)}$ is \emph{strictly} greater than $\rho^{(\AA)}$, and the gap is quantified as the limit term in \eqref{eq:gamma_exact}. To study this term, we regard $\{\bu_i\}$ as a Markov chain. We denote the $K$-dimensional probability simplex as $\mathbb{S}_K$, and define the time-homogeneous transition map $T:\mathbb{S}_K \times \mathbb{R}^K  \to \mathbb{S}_K$ as 
\begin{align}
    T(u,\bR) \triangleq \frac{\bR Au}{\sum_\ell [\bR A u]_\ell}
\end{align}
where $\bR$ is the diagonal matrix with $[\bR]_{kk} = \br_{k} \triangleq \br_{k,1}$ due to time-homogeneity. Note that the $\bu_i$'s in \eqref{eq:gamma_exact} obey this map with $\bu_{i} = T(\bu_{i-1},\bR_i)$, and with $\bR_i$ independent of $\bu_{i-1}$. It can be verified that the Markov chain governed by the mapping $T$ has at least one invariant distribution $\bQ$ on $\mathbb{S}_K$, i.e., if $\bu$ has distribution $\bQ$, so has $T(\bu,\bR)$ --- this comes from the observation that $T$ is Feller continuous and from Krylov-Bogolyubov theorem for compact spaces \cite{hairer2010convergence}. However, it may be a futile attempt to find such invariant distributions. Furthermore, it is not certain if there is a unique invariant distribution although the limit in \eqref{eq:gamma_exact} exists. Hence, we resort to a different method and study a lower bound for the gap.

We now give the intuition behind our derivation for the lower bound. Since the state space of the Markov chain $\mathbb{S}_K$ is compact, there must exist a $u_0$ and a small neighborhood $\cV_0$ around it that is visited infinitely often. So, the limit in \eqref{eq:gamma_exact} is lower bounded as
\begin{align}
\lim_{i \to \infty}\frac 1 i \sum_{j=2}^i\big[D(\pi||\bu_{j-1}) - D(\pi||A\bu_{j-1})\big] &\geq\inf_{u\in \cV_0}\E[D(\pi||\bu_1) - D(\pi||A\bu_1)] \notag \\
    \label{eq:div_difference}&\geq\inf_{u\in \mathbb{S}_K}\E[D(\pi||\bu_1) - D(\pi||A\bu_1)]
\end{align}
with $\bu_1 = T(u, \bR)$. Furthermore, we use the strong data processing inequality \cite{SDPI} to obtain a further lower bound. It is then useful to give the following definition.
\begin{definition}[Contraction coefficient \cite{SDPI}] Let $A_{K\times K}$ be a probability transition matrix. Then the contraction coefficient associated with $A$ is given by
\begin{equation}\vspace{-1em}
  \eta_A \triangleq \sup_{\substack{u,v \in \mathbb{S}_K\\ u \neq v}} \frac {D(Au||Av)}{D(u||v)} \leq 1.
\end{equation}\putqed
\end{definition}
\noindent The coefficient \( \eta_A \) is lower bounded by the second largest absolute eigenvalue of \( A \) \cite{SDPI}. Hence, dense graphs usually lead to lower \( \eta_A \) that is close to 0 in value, while sparse graphs usually lead to \( \eta_A \) that is close to 1 in value. The strong data processing inequality implies that 
\begin{equation}
D(Au||Av) \leq \eta_A D(u||v)
\end{equation}
for all $K$-dimensional probability vectors $u$, $v$. Applying this inequality to \eqref{eq:div_difference}, we obtain
\begin{equation}\label{eq:infimization}
    \rho^{{(\GA)}} - \rho^{(\AA)}  \geq (1-\eta_A)\inf_{u\in \mathbb{S}_K}\E[D(\pi||\bu_1)].
\end{equation}
\textbf{Remark:} Applying the strong data processing inequality directly to \eqref{eq:gamma_exact}, it is evident that 
\begin{align}
   \lim_{i \to \infty}\frac 1 i \sum_{j=1}^i D(\pi||\bu_{j})  &\geq \rho^{{(\GA)}} - \rho^{(\AA)} \notag \\
   &\geq (1-\eta_A)\lim_{i \to \infty} \frac 1 i \sum_{j=1}^i D(\pi||\bu_{j}).
\end{align}
Therefore, when $\eta_A = 0$, we have equality. This corresponds to the case when $A$ is rank one, with every column of $A$ being $\pi$. For this special case, we will show in Sec.~\ref{sec:rankone} that $\gamma$ has a simple form. \putqed \vspace{-1em} \\ 

We now focus on the infimization problem that shows up in \eqref{eq:infimization}. Writing explicitly, it is equivalent to
\begin{align}
   \inf_{\substack{u\\v= uA^{\T}}} \!\! \E\bigg[ \!\sum_k \pi_k\log\frac{\pi_k}{v_k \br_k} + \log(\sum_k v_k \br_k)\bigg] 
   &=\!\!\!\!\inf_{\substack{u\\v= uA^{\T}}} \E\bigg[ \sum_k \pi_k\log\frac{\pi_k}{v_k} + \log(\sum_k v_k \br_k)\bigg] + \rho^{(\GA)} \notag \\\label{eq:infimization_v}
   &\geq \!\inf_{v}  \sum_k \!\pi_k\log\frac{\pi_k}{v_k} + \E\bigg[\log(\sum_k \!v_k \br_k)\bigg] \!+\! \rho^{(\GA)}
\end{align} 
We give the Karush-Kuhn-Tucker (KKT) conditions for this problem in the next result.
\begin{lemma}[Optimality conditions]\label{lem:kkt}
The KKT conditions for the optimization problem in \eqref{eq:infimization_v}, namely the infimization of
\begin{equation}\label{eq:Fv}
  F(v) \triangleq \sum_k \pi_k\log\frac{\pi_k}{v_k} + \E\bigg[\log(\sum_k v_k \br_k)\bigg] + \rho^{(\GA)},
\end{equation}
are given by
\begin{equation}\label{eq:KKT_conditions}
    \frac{\pi_k}{v_k} = \E\bigg[\frac {\br_k}{\sum_\ell \br_\ell v_\ell} \bigg] \quad \forall k.
\end{equation}
\end{lemma}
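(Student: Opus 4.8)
The plan is to set up the Lagrangian for minimizing $F$ over the probability simplex $\mathbb{S}_K$ (recall that $v=uA^T$ stays in $\mathbb{S}_K$ because $A$ is left-stochastic, and in any case $F$ is invariant under the scaling $v\mapsto tv$, so restricting to $\sum_k v_k=1$ is harmless), to argue that a minimizer lies in the relative interior so that only the normalization constraint is active, to take first-order conditions, and finally to identify the associated multiplier.

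First I would check that the infimum is attained in the relative interior. The function $F$ is finite there --- for instance $F(\pi)$ is finite, since $\E[\,\lvert\log\br_k\rvert\,]<\infty$ for every $k$: Assumption~1 gives $\E[-\log\br_k]=D(L_k(\cdot|\theta^\circ)\|L_k(\cdot|\theta))<\infty$, and $\E[\br_k]=\sum_{\xi}L_k(\xi|\theta)=1$ controls the opposite tail. Moreover $F$ blows up toward the boundary: if some coordinate $v_k\to 0$, then since the Perron entries satisfy $\pi_k>0$, the term $\pi_k\log(\pi_k/v_k)\to+\infty$, while $\E[\log(\sum_\ell v_\ell\br_\ell)]$ stays bounded below because another coordinate of $v$ remains away from $0$ and $\br_\ell>0$ almost surely. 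Hence a minimizer $v^\star$ exists with all $v^\star_k>0$, the KKT multipliers attached to the constraints $v_k\ge 0$ vanish, and stationarity reduces to $\nabla F(v^\star)+\lambda\mathds{1}_K=0$ for a single scalar $\lambda$.

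Next I would differentiate. Term by term,
\begin{align}
\frac{\partial F}{\partial v_k}= -\frac{\pi_k}{v_k}+\E\!\left[\frac{\br_k}{\sum_\ell v_\ell\br_\ell}\right],
\end{align}
where the exchange of $\partial/\partial v_k$ and $\E[\cdot]$ is justified by dominated convergence: near an interior point the integrand and its difference quotients are bounded by the constant $1/\min_\ell v_\ell$, because $\sum_\ell v_\ell\br_\ell\ge v_k\br_k$. Stationarity then reads $-\pi_k/v_k+\E[\br_k/\sum_\ell v_\ell\br_\ell]+\lambda=0$ for every $k$. To pin down $\lambda$, multiply the $k$-th equation by $v_k$ and sum over $k$: using $\sum_k\pi_k=1$, $\sum_k v_k=1$, and the pathwise identity $\sum_k v_k\br_k/\sum_\ell v_\ell\br_\ell=1$, the sum collapses to $-1+1+\lambda=0$, so $\lambda=0$. (Equivalently, degree-zero homogeneity of $F$ in $v$ and Euler's identity give $\sum_k v_k\,\partial_{v_k}F=0$ directly, which forces $\lambda=0$.) Substituting $\lambda=0$ yields precisely \eqref{eq:KKT_conditions}.

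The part I expect to require the most care is establishing that a minimizer lies in the relative interior --- this is what makes the inequality multipliers drop out and lets the conditions take the stated clean form --- together with the dominated-convergence justification for differentiating under the expectation. I would also flag that $F$ decomposes as a convex piece $-\sum_k\pi_k\log v_k$ plus a concave piece $\E[\log\sum_\ell v_\ell\br_\ell]$, so \eqref{eq:KKT_conditions} is obtained as a \emph{necessary} first-order condition; the lemma does not assert, and the argument above does not establish, uniqueness of the solution or sufficiency for global optimality.
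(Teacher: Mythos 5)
Your proof is correct and follows essentially the same route as the paper's: compute $\fracpartial{F}{v_k}$ with a dominated-convergence justification for moving the derivative inside the expectation, observe that positivity of the Perron entries forces any candidate minimizer into the interior of the simplex so only the equality form of the KKT conditions is relevant, and eliminate the multiplier by multiplying the stationarity equations by $v_k$ and summing. Your domination $\br_k/\sum_\ell v_\ell\br_\ell \leq 1/\min_\ell v_\ell$ by a constant is a slightly cleaner way to justify the interchange than the paper's bound by the random variable itself, but otherwise the argument coincides.
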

\begin{proof} See Appendix \ref{app:kkt}. 
\end{proof}
Note that $F(v)$ cannot approach the infimum over the boundary of the $K$-dimensional simplex as $F(v)$ tends to infinity close to the boundary. Using the KKT conditions above, and replacing $\dfrac {\pi_k}{v_k}$ with $\E\big[\dfrac{\br_k}{\sum_{\ell} v_\ell \br_\ell}\big]$ in the first summation term in $F(v)$, we consider the infimization of 
\begin{equation}
G(v) \triangleq \sum_k \pi_k\log\E\bigg[\frac{\br_k}{\sum_{\ell} v_\ell \br_\ell}\bigg] + \E\bigg[\log(\sum_k v_k \br_k)\bigg] + \rho^{(\GA)}\label{eq:Gv}.
\end{equation}
Observe that $\inf_v F(v) \geq \inf_v G(v)$. We summarize the above results in the theorem below.
\begin{theorem}[Variational lower bound to the gap]\label{thm:gap}The performance gap is lower bounded as
\begin{align}
\rho^{(\GA)}-\rho^{(\AA)}  \geq (1-\eta_A) \inf_v F(v) \geq (1-\eta_A) \inf_v G(v)
\end{align}
with $F(v)$ defined in \eqref{eq:Fv} and $G(v)$ in \eqref{eq:Gv}. Furthermore if $\eta_A < 1$, then
$\rho^{(\GA)} = \rho^{(\AA)}$ if and only if $\br_k\!=\!\br_\ell$ for all $k,\ell$.
\end{theorem}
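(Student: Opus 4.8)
The two displayed inequalities need no new ideas; my plan is to assemble them from the estimates already laid out in Section~\ref{sec:bounds}. I would start from the exact identity~\eqref{eq:gamma_exact}, which writes $\rho^{(\GA)}-\rho^{(\AA)}$ as a time average of relative-entropy differences, lower bound that time average by an infimum over $\mathbb{S}_K$ via the compactness-plus-recurrence argument behind~\eqref{eq:div_difference}, peel off the factor $(1-\eta_A)$ with the strong data processing inequality as in~\eqref{eq:infimization}, and relax the constraint $v=Au$ to reach $\inf_u\E[D(\pi\|T(u,\bR))]\ge\inf_v F(v)$ as in~\eqref{eq:infimization_v}. The last step, $\inf_v F(v)\ge\inf_v G(v)$, is Lemma~\ref{lem:kkt}: since $F$ tends to $+\infty$ at the boundary of the simplex (noted just before the theorem), a minimizer $v^\star$ of $F$ is interior and satisfies the KKT identities~\eqref{eq:KKT_conditions}, and substituting them into the first sum of $F(v^\star)$ turns it into $G(v^\star)\ge\inf_v G(v)$.

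For the equivalence, the ``if'' direction will only need Theorem~\ref{thm:existence} and the strong law of large numbers. If $\br_k=\br_\ell$ for all $k,\ell$, then each $\bR_i$ is $\br_{1,i}$ times the identity, so $A^{\T}\bR_j=\br_{1,j}A^{\T}$ and $\bY_i=\big(\prod_{j=1}^i\br_{1,j}\big)(A^{\T})^i$; hence $\frac1i\log[\bY_i]_{\ell k}=\frac1i\sum_{j=1}^i\log\br_{1,j}+\frac1i\log[(A^{\T})^i]_{\ell k}$, where the first term tends a.s.\ to $\E[\log\br_{1,1}]=-D(L_1(\cdot|\theta^\circ)\|L_1(\cdot|\theta))$ by the strong law of large numbers and the second vanishes because $(A^{\T})^i\to\mathds{1}_K\pi^{\T}$ has strictly positive entries. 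Thus $\gamma=-D(L_1(\cdot|\theta^\circ)\|L_1(\cdot|\theta))$, so Theorem~\ref{thm:existence} gives $\rho^{(\AA)}=D(L_1(\cdot|\theta^\circ)\|L_1(\cdot|\theta))$; and since all agents then share this same KL divergence, \eqref{eq:rho_def} yields $\rho^{(\GA)}=\big(\sum_k\pi_k\big)D(L_1(\cdot|\theta^\circ)\|L_1(\cdot|\theta))=\rho^{(\AA)}$.

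For the ``only if'' direction, I would suppose $\eta_A<1$ and $\rho^{(\GA)}=\rho^{(\AA)}$. Then \eqref{eq:infimization} forces $\inf_{u\in\mathbb{S}_K}\E[D(\pi\|T(u,\bR))]=0$, the integrand being nonnegative. The plan is to show this infimum is \emph{attained}: for a.e.\ $\bR$ the map $u\mapsto T(u,\bR)$ is continuous on the compact simplex $\mathbb{S}_K$ (its denominator $\sum_k\br_k[Au]_k$ is strictly positive, since $\br_k>0$ a.s.\ by Assumption~1 and $Au$ is a nonzero nonnegative vector), $D(\pi\|\cdot)$ is lower semicontinuous, and Fatou's lemma transfers lower semicontinuity to $u\mapsto\E[D(\pi\|T(u,\bR))]$; a lower semicontinuous function on a compact set attains its minimum, so there is a $u^\star$ with $\E[D(\pi\|T(u^\star,\bR))]=0$, i.e.\ $T(u^\star,\bR)=\pi$ almost surely. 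Writing $w\triangleq Au^\star$ (which has strictly positive entries because $T(u^\star,\bR)=\pi$ does), the componentwise relation $\br_k w_k=\pi_k\sum_m\br_m w_m$ gives, for each pair $k,\ell$, $\br_k/\br_\ell=\pi_k w_\ell/(\pi_\ell w_k)$ almost surely, whose right-hand side is a \emph{deterministic} constant $\lambda_{k\ell}$. Hence $\br_k=\lambda_{k\ell}\br_\ell$ almost surely; taking expectations and using that likelihood ratios have unit mean, $\E[\br_k]=\E[\br_\ell]=1$, forces $\lambda_{k\ell}=1$, so $\br_k=\br_\ell$ almost surely.

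The main obstacle is this ``only if'' direction, and within it the attainment step: upgrading ``$\inf_u\E[D(\pi\|T(u,\bR))]=0$'' to a genuine point $u^\star$ with $T(u^\star,\bR)=\pi$ almost surely --- not merely a minimizing sequence --- is precisely what the lower-semicontinuity and Fatou argument provides. The subsequent collapse to ``$\br_k=\br_\ell$'' is short but relies crucially on the normalization $\E[\br_k]=1$; without it one could only deduce that each ratio $\br_k/\br_\ell$ is almost surely deterministic. Everything preceding the equivalence is bookkeeping built on Section~\ref{sec:bounds} and Lemma~\ref{lem:kkt}.
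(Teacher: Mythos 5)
Your assembly of the two displayed inequalities and your ``if'' direction coincide with the paper's own argument: the bounds are exactly the chain \eqref{eq:gamma_exact}--\eqref{eq:infimization}, \eqref{eq:infimization_v} and Lemma~\ref{lem:kkt}, and for identical likelihood ratios the paper also writes $\bY_i=\big(\prod_j\br_j\big)(A^{\T})^i$ and invokes the law of large numbers together with $A^i\to\pi\mathds{1}_K^{\T}$. Where you genuinely depart from the paper is the ``only if'' direction. The paper proves the contrapositive inside the variational quantity itself: applying Jensen's inequality to $G(v)$ shows $G(v)\geq 0$ with equality only if $\br_k/\sum_\ell v_\ell\br_\ell$ is a.s.\ constant for every $k$, which (using $\E[\br_k]=1$) forces $\br_k=\br_m$ a.s.; hence if the ratios are not all identical, $G(v)>0$ for every $v$ and the paper concludes $\inf_v G(v)>0$, so the gap is strictly positive by \eqref{eq:infimization}. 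You instead work one level up, at $\inf_{u\in\mathbb{S}_K}\E[D(\pi\|T(u,\bR))]$: you show this infimum is \emph{attained} (lower semicontinuity of $D(\pi\|\cdot)$ composed with the continuous map $T(\cdot,\bR)$, Fatou, compactness of $\mathbb{S}_K$), so a zero gap yields a point $u^\star$ with $T(u^\star,\bR)=\pi$ a.s., and the componentwise identity $\br_k[Au^\star]_k=\pi_k\sum_m\br_m[Au^\star]_m$ plus the normalization $\E[\br_k]=1$ collapses to $\br_k=\br_\ell$ a.s. Both routes are correct and both hinge on $\E[\br_k]=1$; what yours buys is an explicit treatment of the passage from pointwise positivity to positivity of the infimum, a step the paper leaves implicit when it jumps from ``$G(v)$ strictly positive for each $v$'' to ``$\inf_v G(v)>0$'', while the paper's route has the advantage of staying entirely within the function $G$ that appears in the theorem statement and requiring no topological argument about the transition map $T$.
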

\begin{proof}
See Appendix \ref{app:jensen}. 
\end{proof}
\noindent\textbf{Remark:} The bound we have provided in this section partially captures the effect of network structure through the quantity $\eta_A$. It is known that $\eta_A \leq \dob_A$, the Dobrushin coefficient of $A$ \cite{SDPI}:
\begin{equation}\label{eq:dobrushin_coeff}
    \dob_A \triangleq \max_{k,\ell}\frac 1 2\|a_k-a_\ell\|_1
\end{equation}
where \( a_k \) denotes the \(k\)th column of the matrix \(A\). Furthermore, $\eta_A < 1$ if and only if $\dob_A < 1$ \cite{SDPI,AG76}. Hence, for the broad class of combination matrices where $\dob_A < 1$, i.e., geometrically ergodic ones \cite[Chapter 2.7]{krishnamurthy_2016}, the bound is non-trivial. However when $\dob_A = 1$, Theorem \ref{thm:gap} yields the trivial bound $\rho^{(\GA)} \geq \rho^{(\AA)}$. This implies that whenever there are two agents $k$, $\ell$ with non-overlapping sets of neighbors, i.e., $\cN_k\cap \cN_\ell = \emptyset$, we obtain the trivial bound. One might attempt to strengthen the bound by replacing $\eta_A$ with
\begin{equation}
\tilde\eta_A \triangleq \sup_{v\neq \pi} \frac{D(\pi||Av)}{D(\pi||v)}\leq\eta_A.
\end{equation}
However, to the best of our knowledge, there may not be a straightforward method to calculate this quantity.\putqed \vspace{-1em}\\

Theorem \ref{thm:gap} points out that the decay rate of AA-diffusion is highly dependent on network connectivity via $\eta_A$ as opposed to the decay rate of GA-diffusion, whose decay rate only depends on the network centrality, i.e., Perron vector $\pi$. In \cite{jadbabaie2013information}, the authors study the effects of network regularity under AA-consensus on the upper bound $\rho^{(\GA)}$. Different from their work, the bound on the performance gap in Theorem \ref{thm:gap} captures the effect of network connectivity.
\subsection{Some Special Cases}\label{sec:special_cases_exact}
The bounds given in the previous sections are in variational form. Although we have studied certain characteristics of these bounds, i.e., found the KKT conditions, still, the bounds are highly dependent on the joint distribution of the data across the users --- recall that the decay rate of GA-diffusion, $\rho^{(\GA)}$, has a closed form expression and only depends on the marginals. Moreover, the extremal points satisfying the KKT conditions are difficult to find in general. Hence, we study two special cases in this section.
\subsubsection{Rank-one Combination Matrices}\label{sec:rankone}
When $A$ is a rank one matrix, it turns out that $\rho^{(\AA)}$ has a closed form. In this case, $A$ can be written as $A = \pi \mathds{1}_K^{\T}$. Then,
\begin{align}\label{eq:rank_one}
    A^{\T}\bR A^{\T} &= \mathds{1}_K\pi^{\T}\bR \mathds{1}_K\pi^{\T}  \notag \\
    &=\big(\sum_k \pi_k \br_k\big)\mathds{1}_K \pi^{\T} \notag \\&=\big(\sum_k \pi_k \br_k\big)A^{\T}
\end{align}
and it follows that
\begin{equation}
    \bY_i = \prod_{j = 1}^{i-1} (A^{\T}\bR_j)(A^{\T}\bR_i) = \prod_{j = 1}^{i-1}\big(\sum_k \pi_k \br_{k,j}\big) (A^{\T}\bR_i).
\end{equation}
If we choose $k$, $\ell$ such that $a_{\ell k} > 0$, it holds that
\begin{align}
    \frac 1 i \log [\bY_i]_{k\ell} = \frac 1 i \sum_{j = 1}^{i-1} \log\big(\sum_k \pi_k \br_{k,j}\big) + \frac 1 i \log({a_{\ell k}\br_{\ell,i}})
\end{align}
and from the law of large numbers
\begin{align}
    \lim_{i \to \infty} \frac 1 i \log [\bY_i]_{k\ell} = \E\Big[\log\big(\sum_k \pi_k \br_k\big)\Big].
\end{align}
Corollary \ref{cor:existence_gamma} and Theorem \ref{thm:existence} then lead to the following.
\begin{proposition}[Exact rate under rank-one topologies]\label{prop:rankone}
\begin{equation}
   \rho^{(\AA)} = -\gamma = -\E\Big[\log\big(\sum_k \pi_k \br_k\big)\Big].
\end{equation}\putqed
\end{proposition}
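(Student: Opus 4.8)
The plan is to make rigorous the computation sketched just before the statement; beyond Corollary~\ref{cor:existence_gamma} and Theorem~\ref{thm:existence} no new ingredient is needed, because rank-one-ness linearizes the otherwise intractable random-matrix product. \emph{Step 1 (collapsing the product).} Write $A=\pi\mathds{1}_K^{\T}$, so that $A^{\T}=\mathds{1}_K\pi^{\T}$ and, using $\pi^{\T}\mathds{1}_K=1$, one gets for any diagonal $\bR$ the identity $A^{\T}\bR A^{\T}=\big(\sum_k\pi_k\br_k\big)A^{\T}$ of \eqref{eq:rank_one}. Feeding this repeatedly into $\bY_i=\prod_{j=1}^i(A^{\T}\bR_j)$, a one-line induction gives
\[
\bY_i=\Big(\prod_{j=1}^{i-1}\sum_k\pi_k\br_{k,j}\Big)\,(A^{\T}\bR_i),
\]
so the matrix product degenerates into a product of i.i.d.\ positive scalars times a single leftover factor.

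\emph{Step 2 (one entry, then the SLLN).} Fix indices $k,\ell$ with $a_{\ell k}>0$; in the rank-one case $a_{\ell k}=\pi_\ell$, which is strictly positive by Perron--Frobenius, so the choice is unconstrained. From Step~1,
\[
\tfrac1i\log[\bY_i]_{k\ell}=\tfrac1i\sum_{j=1}^{i-1}\log\Big(\sum_m\pi_m\br_{m,j}\Big)+\tfrac1i\log\big(a_{\ell k}\br_{\ell,i}\big),
\]
which is well defined a.s.\ because Assumption~1 forces $L_\ell(\cdot|\theta^\circ)\ll L_\ell(\cdot|\theta)$, hence $\br_{\ell,i}>0$ a.s. The summands $\log(\sum_m\pi_m\br_{m,j})$ are i.i.d.\ in $j$ and integrable: the positive part is dominated by $\sum_m\pi_m\br_{m,j}$, which has unit mean, while by concavity of the logarithm $\log(\sum_m\pi_m\br_{m,j})\ge\sum_m\pi_m\log\br_{m,j}$, whose negative part is integrable thanks to the finite KL divergences of Assumption~1. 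Hence the strong law of large numbers gives $\tfrac1i\sum_{j=1}^{i-1}\log(\sum_m\pi_m\br_{m,j})\to\E[\log(\sum_k\pi_k\br_k)]$ almost surely; the same Assumption-1 bound yields $\E|\log\br_{\ell,i}|<\infty$, so $\tfrac1i\log\br_{\ell,i}\to0$ a.s.\ and the boundary term vanishes. Therefore $\tfrac1i\log[\bY_i]_{k\ell}\to\E[\log(\sum_k\pi_k\br_k)]$ almost surely.

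\emph{Step 3 (conclusion and obstacle).} By Corollary~\ref{cor:existence_gamma}, $\gamma=\lim_i\tfrac1i\log[\bY_i]_{k\ell}$ exists a.s.\ and equals a deterministic constant independent of $k,\ell$; Step~2 identifies this constant as $\E[\log(\sum_k\pi_k\br_k)]$. Theorem~\ref{thm:existence} then gives $\rho^{(\AA)}=-\gamma=-\E[\log(\sum_k\pi_k\br_k)]$, which is the claim. I expect no genuine obstacle here: the hard part of the general problem --- non-commutativity of the $A^{\T}\bR_j$ and the opacity of Kingman's constant --- is simply absent, because \eqref{eq:rank_one} reduces the product to i.i.d.\ scalars and the remaining work is a routine SLLN argument. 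The only items deserving care in the full write-up are keeping the chosen entry of $\bY_i$ strictly positive so that its logarithm is finite (this is where $a_{\ell k}>0$, automatic in the rank-one case, is used) and verifying the integrability hypothesis of the SLLN, both of which are immediate from Assumption~1 and the positivity of $\pi$.
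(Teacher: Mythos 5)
Your proposal is correct and follows essentially the same route as the paper: collapse the product via the rank-one identity $A^{\T}\bR A^{\T}=(\sum_k\pi_k\br_k)A^{\T}$, apply the SLLN to the resulting i.i.d.\ scalar factors for a single positive entry $[\bY_i]_{k\ell}$, and conclude via Corollary~\ref{cor:existence_gamma} and Theorem~\ref{thm:existence}. The only difference is that you spell out the integrability and vanishing-boundary-term details that the paper leaves implicit, which is a welcome but minor refinement.
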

\noindent Notice from above that the performance gap
\begin{equation}\label{eq:jensen_gap}
    \rho^{(\GA)} - \rho^{(\AA)} = \E\Big[\log\big(\sum_k \pi_k \br_k\big)-\sum_k \pi_k \log\br_k\Big]
\end{equation}
is equal to the expectation of a Jensen's inequality gap. \vspace{-1em}\\

Fully-connected networks with rank-one combination matrices are equivalent to architectures with fusion center in terms of performance, because each agent computes the same weighted average of all beliefs across the network. In this sense, each agent acts like a fusion center. Therefore, this special case result is of interest for inference problems with central processors. Consider a federated system where ($i$) agents send their beliefs to the center, ($ii$) the center averages the peripheral agents' beliefs with importance sampling, and ($iii$) sends the aggregated belief back to the agents. The learning rate of such system is given in Proposition \ref{prop:rankone}.

\subsubsection{Exchangeable Networks}
In this special case, we assume that the data is exchangeable across the users. More precisely, $\{\bxi_{1}(\theta),\dots,\bxi_{K}(\theta)\}$ constitutes a set of exchangeable random variables for every $\theta \in \Theta$.
\begin{definition}[Exchangeable random variables] A set of random variables is called exchangeable if the distribution of
$\bxi_{1},\dots,\bxi_{K}$ remains unchanged under any permutation over the index set, i.e., for all $\xi_1,\dots, \xi_k$
\begin{equation}
    \bP{\bxi_{1}\leq \xi_1,\dots,\bxi_{K}\leq \xi_k}  = \bP{\bxi_{\sigma_1}\leq \xi_1,\dots,\bxi_{\sigma_K}\leq \xi_k}
\end{equation}
with $\sigma$ being any permutation of $\{1,\dots,K\}$. \putqed
\end{definition}

Exchangeability is a weaker assumption than i.i.d., as the data need not be independent across the agents. Observe that exchangeable networks could be of particular interest as they model fair networks --- since exchangeability requires identical distributions of data across the agents, no agent learns better than another if there was no cooperation.
For this particular example, we also assume that $A$ is doubly stochastic, i.e., $\sum_{k} a_{\ell k} = 1$ as well. Then, it is easily seen that each element of the Perron vector of $A$ is $\pi_k = 1/K$. Under this assumption, we are able to solve the KKT conditions of the problem \eqref{eq:infimization_v}.
\begin{theorem}[Lower bound under exchangeable networks]\label{thm:exchangeable}If the data across the agents is exchangeable and $A$ is doubly stochastic,
$\pi$ is the unique solution of \eqref{eq:infimization_v}. Hence,
\begin{align}
    \rho^{(\AA)} \leq \eta_A\rho^{(\GA)} - (1-\eta_A)\E\Big[\log\Big(\frac 1 K\sum_k \br_k\Big)\Big]
\end{align}
and moreover,
\begin{align}
     \rho^{(\GA)}-\rho^{(\AA)} &\geq (1-\eta_A) \left(\rho^{(\GA)} +\E\Big[\log\Big(\frac 1 K\sum_k \br_k\Big)\Big] \right) \notag \\
     &=(1-\eta_A) \E\Big[\log\big(\sum_k \frac{1}{K} \br_k\big)\!-\!\!\sum_k \frac{1}{K} \log\br_k\Big]
\end{align}
\end{theorem}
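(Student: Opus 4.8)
The plan is to solve the variational problem in \eqref{eq:infimization_v} in closed form, show its minimizer is $\pi$, and feed that value into Theorem~\ref{thm:gap}. Since $A$ is doubly stochastic, $A\mathds{1}_K=\mathds{1}_K$, so the normalized Perron vector is $\pi=\frac1K\mathds{1}_K$. The first step is to verify that $v=\pi$ satisfies the optimality conditions \eqref{eq:KKT_conditions} of Lemma~\ref{lem:kkt}. Plugging $v_k=1/K$, the left side is $\pi_k/v_k=1$, while the right side is $K\,\E\big[\br_k/\sum_\ell\br_\ell\big]$. Since the data is exchangeable, so is the vector $(\br_1,\dots,\br_K)$ --- exchangeability forces common marginal likelihoods, hence each $\br_k$ is the same measurable function of $\bxi_k$ --- so the law of $\br_k/\sum_\ell\br_\ell$ does not depend on $k$; as these $K$ quantities sum to $1$ identically, each has mean $1/K$. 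Therefore the right side equals $1$ as well, and $v=\pi$ is a stationary point of $F$.

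The main step is to upgrade ``stationary point'' to ``unique global minimizer''. This is not automatic, because $F$ in \eqref{eq:Fv} is the sum of the convex map $v\mapsto D(\pi||v)$ and the concave map $v\mapsto\E[\log\sum_k v_k\br_k]$ and so need not be convex in $v$. The key observation is that in the exponential parametrization $v_k=e^{t_k}/\sum_j e^{t_j}$ of the relative interior of $\mathbb{S}_K$ the normalizing constant cancels, leaving
\begin{equation*}
F(v(t))=\mathrm{const}-\langle\pi,t\rangle+\E\Big[\log\sum_k e^{t_k}\br_k\Big]+\rho^{(\GA)} .
\end{equation*}
For each realization $\br$, the map $t\mapsto\log\sum_k e^{t_k}\br_k=\log\sum_k e^{t_k+\log\br_k}$ is a log-sum-exp, hence convex with Hessian $\mathrm{diag}(q)-qq^{\T}$ where $q_k\propto e^{t_k}\br_k$; taking expectations, $F(v(\cdot))$ is convex in $t$. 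Furthermore $x^{\T}(\mathrm{diag}(q)-qq^{\T})x=\sum_k q_k x_k^2-(\sum_k q_k x_k)^2$ is a variance, hence nonnegative with equality only when $x\in\mathrm{span}(\mathds{1}_K)$, and since $\br_k>0$ almost surely (a consequence of Assumption~1) this holds for every realization, so the Hessian of $t\mapsto\E[\log\sum_k e^{t_k}\br_k]$ is positive definite on $\mathds{1}_K^{\perp}$; that is, $F(v(\cdot))$ is strictly convex modulo the one degenerate direction $\mathds{1}_K$ along which the parametrization is constant. Because $F(v)\to\infty$ as $v$ approaches the boundary of $\mathbb{S}_K$ (the $D(\pi||v)$ term diverges while $\E[\log\sum_k v_k\br_k]$ is bounded below under Assumption~1), the infimum is attained in the interior, and strict convexity then forces the minimizer to be the unique critical point $v=\pi$.

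Finally I would substitute back: $D(\pi||\pi)=0$ gives $\inf_v F(v)=F(\pi)=\E\big[\log\frac1K\sum_k\br_k\big]+\rho^{(\GA)}$. Inserting this into Theorem~\ref{thm:gap} yields $\rho^{(\GA)}-\rho^{(\AA)}\ge(1-\eta_A)\big(\rho^{(\GA)}+\E[\log\frac1K\sum_k\br_k]\big)$, which rearranges to $\rho^{(\AA)}\le\eta_A\rho^{(\GA)}-(1-\eta_A)\E[\log\frac1K\sum_k\br_k]$; and substituting $\rho^{(\GA)}=\sum_k\frac1K D(L_k(\cdot|\theta^\circ)||L_k(\cdot|\theta))=-\E[\sum_k\frac1K\log\br_k]$ from \eqref{eq:rho_def} (using $\pi=\frac1K\mathds{1}_K$) recasts the gap bound as the stated expected Jensen gap $\E[\log\sum_k\frac1K\br_k-\sum_k\frac1K\log\br_k]$.

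I expect the convexity argument of the second paragraph to be the crux: checking the KKT conditions is a short exchangeability computation, but promoting a stationary point to a unique global minimizer requires the change of variables that exposes the hidden convexity of $F$, together with care about the degenerate $\mathds{1}_K$ direction and the divergence of $F$ at the simplex boundary, which is what rules out boundary minimizers.
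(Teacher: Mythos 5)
Your proposal is correct, and it reaches the conclusion by a genuinely different route at the crucial uniqueness step. The paper works directly on the KKT system of Lemma \ref{lem:kkt}: multiplying the conditions by $v_k$ and using $\pi_k=1/K$, it compares the resulting identities for a pair of agents, and a swap argument based on exchangeability (with $\bss=\sum_{\ell>2}\br_\ell v_\ell$ held fixed) shows that $v_1>v_2$ or $v_2>v_1$ each contradict stationarity, forcing $v_1=v_2$ and, by symmetry, $v=\pi$; combined with the earlier observation that $F$ blows up at the simplex boundary, this pins the minimizer down as $\pi$ without ever checking convexity. You instead verify by a one-line symmetry computation that $\pi$ satisfies \eqref{eq:KKT_conditions} (using that $\br_k/\sum_\ell\br_\ell$ has a $k$-independent law and these ratios sum to one), and then promote this stationary point to the unique global minimizer through the softmax reparametrization $v_k=e^{t_k}/\sum_j e^{t_j}$, under which the $\log Z$ terms cancel and $F$ becomes an expectation of shifted log-sum-exp functions, convex in $t$ and strictly convex transverse to $\mathds{1}_K$ because the per-realization Hessian $\mathrm{diag}(q)-qq^{\T}$ is a variance form with full-support $q$ (here $\br_k>0$ a.s.\ by Assumption~1 is exactly what you need). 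Your final substitution into Theorem \ref{thm:gap} and the identity $\rho^{(\GA)}=-\E\big[\tfrac1K\sum_k\log\br_k\big]$ match the paper. What your route buys: the hidden convexity shows the stationarity conditions are sufficient and the interior minimizer is unique for \emph{any} data distribution, with exchangeability used only to identify the minimizer as $\pi$; it also makes the boundary/attainment issue essentially moot, since convexity alone gives global optimality of the critical point. What the paper's route buys: it is more elementary (no reparametrization, no Hessian, no interchange of differentiation and expectation, which you gloss over but which is only a technicality since strict convexity can be argued per realization), at the cost of exploiting exchangeability more heavily and leaning on the boundary-divergence and attainment remarks stated before Lemma \ref{lem:kkt}. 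Both arguments are sound.
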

\begin{proof}
It immediately follows that $\br_1,\dots,\br_K$ is also exchangeable. Also, the KKT conditions in \eqref{eq:KKT_conditions} imply
\begin{equation}
\E\Bigg[\frac{\br_1v_1}{\sum_\ell \br_\ell v_\ell}\Bigg] = \E\Bigg[\frac{\br_2v_2}{\sum_\ell \br_\ell v_\ell}\Bigg]
\end{equation}
and because of exchangeability,
\begin{align}\label{eq:KKT_difference}
\E\Bigg[\frac{\br_1v_1}{\br_1v_1 + \br_2v_2 + \bss}\Bigg] = \E\Bigg[\frac{\br_1v_2}{\br_2v_1 + \br_1v_2 + \bss}\Bigg]
\end{align}
where $\bss \triangleq \sum_{\ell > 2}\br_\ell v_\ell$. Suppose that $v_1 > v_2$. The difference of the two terms in \eqref{eq:KKT_difference} then becomes \begin{align}
&\E\Bigg[\frac{\br_1v_1}{\br_1v_1 + \br_2v_2 + \bss} - \frac{\br_1v_2}{\br_2v_1 + \br_1v_2 + \bss}\Bigg] \notag \\
&=  \!\!(v_1-v_2)\E\Bigg[\frac{\br_1\br_2(v_1+v_2) + \br_1\bss}{(\br_1v_1 + \br_2v_2 + \bss)(\br_2v_1 + \br_1v_2 + \bss)}\Bigg].
\end{align}
Since we assumed $v_1 > v_2$, the difference in \eqref{eq:KKT_difference} must be strictly greater than zero. A similar argument for $v_2 > v_1$ yields the same result. This implies $v_1 = v_2$, and by symmetry, $v_k = v_\ell$ for all $k,\ell$. Therefore, $v$ must be equal to $\pi^{\T}$.
\end{proof} 

It is seen from Theorem \ref{thm:exchangeable} that the lower bound on the performance gap increases as the network becomes more connected, i.e., small $\eta_A$. Moreover, the Jensen's inequality gap in \eqref{eq:jensen_gap} and Theorem \ref{thm:exchangeable} increases when the observations become more diverse across the agents. The performance gap drifts away from zero. This results in an interesting observation: If the individual agents, or any subgroup of $M$ agents, have the same learning abilities (implied by exchangeability), the overall learning speed of the network decreases with respect to the increased amount of collaboration under AA-diffusion. However, the learning abilities under GA-diffusion is not affected --- every agent in the network learns at the same speed regardless of the amount of collaboration.

\section{\textbf{Numerical Results}}\label{sec:numerical}

In this section, we provide numerical results to study the gap between AA and GA-diffusion; and to study the effect of network connectivity on the decay rate of AA-diffusion. We simulated the networks given in Figure \ref{fig:network2reg}, \ref{fig:network4reg}, \ref{fig:networknotreg}. All networks consist of $K=10$ nodes. The network \ref{fig:network2reg} is 2-regular, \ref{fig:network4reg} is 3-regular and \ref{fig:networknotreg} is not a regular graph. Recall that a graph is called $D$-regular when each vertex has $D$ neighbors.

\begin{figure*}[hbt!]
	\centering
	\subfloat[a][]{
  \includegraphics[width=.25\linewidth]{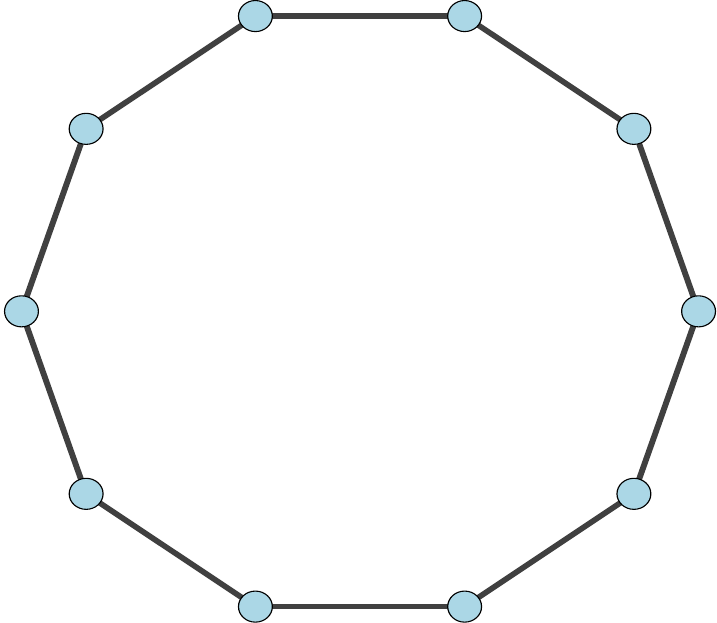}\label{fig:network2reg}
  }\hfil
  \subfloat[b][]{
  \includegraphics[width=.25\linewidth]{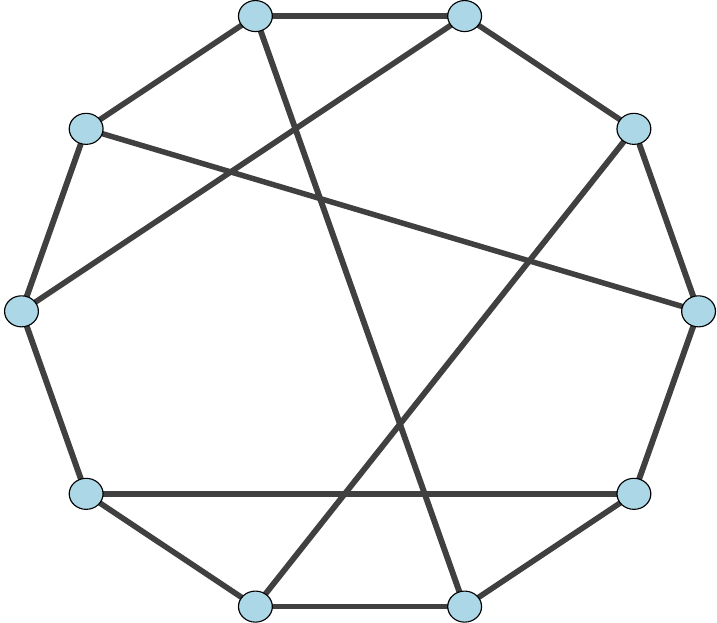}\label{fig:network4reg}
  }\hfil
  \subfloat[c][]{
  \includegraphics[width=.25\linewidth]{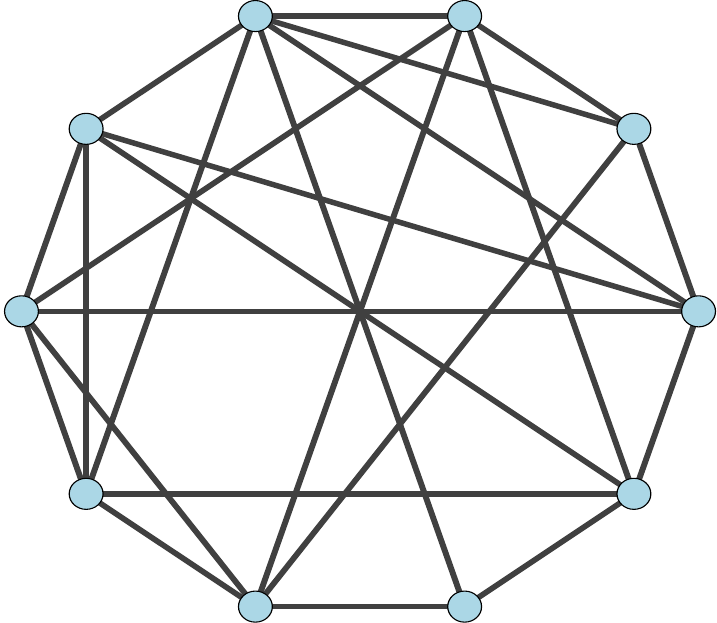}\label{fig:networknotreg}
  }
  \caption{\small The three networks consist of $K=10$ nodes. ($a$) is a 2-regular network, ($b$) is a 3-regular network and ($c$) is non-regular with 24 edges. Self-loops are omitted for visual simplicity in all figures.}
 \end{figure*}
 
Our first simulation compares the 2-regular and 3-regular networks. The combination matrices are denoted by $A_2$ and $A_3$, and are set as follows. For an $\alpha \in [0,1]$, if nodes $\ell$ and $k$ are connected, then $[A_2]_{\ell k} = \frac {1-\alpha} 2$ and $[A_3]_{\ell k} = \frac {1-\alpha} 3$ respectively; and $[A_2]_{\ell \ell} = [A_3]_{\ell \ell} = \alpha$. The other elements are necessarily set to zero. Observe that we ensure the diagonal elements of $A_2$ and $A_3$ are the same and equal to $\alpha$ --- this setting enables us to compare the decay rates with the bound given in \cite{jadbabaie2013information}. We have chosen $\alpha = 0.05$. Note that $A_2$ and $A_3$ are doubly stochastic and thus their Perron vectors $\pi$ have entries $\pi_k = 1/K$.
Moreover, we assume $H =2$. Under the true hypothesis $\theta^\circ$, the agents observe exponential random variables with parameter 1, and under the alternative hypothesis $\theta$, each agent $k$ observes exponential random variables with parameter $\beta_k$, with $\beta_k \in [0.500, 0.300, 0.025, 0.750, 1.200, 2.250, 0.900, 1, 0.250, 0.025]$. For simplicity, we assume that the data is independent across the agents.

The results of the first experiment are given in Figure \ref{fig:reg_comparison}. We have plotted agent 1's belief decay rates on hypothesis $\theta$, with AA and GA-diffusion algorithms (denoted with the superscript AA and GA, respectively) and on networks \ref{fig:network2reg} and \ref{fig:network4reg} (denoted with the subscript added to the algorithm description). For instance, the $\AA_3$ in the superscript refers to the decay rate with AA-diffusion and on the 3-regular network \ref{fig:network4reg}. First, observe the significant performance gap between GA and AA-diffusion learning rates. As expected, the decay rate of GA-diffusion is not affected by the network regularity --- we know it only depends on the Perron vector $\pi$. However, the AA-diffusion decay rate is visibly affected, which is expected according to Theorem \ref{thm:gap}. As a final remark, we point out that in \cite{jadbabaie2013information}, the authors upper bounded the decay rate of AA with consensus algorithm (not AA-diffusion) with $\alpha \rho^{(\GA)}$. This is not true for AA-diffusion as for our setting $\alpha = 0.05$, $\rho^{(\GA)} = 0.7261$, and $\rho^{(\AA_2)}$, $\rho^{(\AA_4)}$ both seem to be above $\alpha \rho^{(\GA)} = 0.0363$. This also shows that since the AA-diffusion decay rate is above the upper bound given for AA-consensus, the agents learn faster with the diffusion algorithm than with consensus. This behavior is consistent with the results in \cite{tu2012}, which showed that diffusion strategies are superior to consensus strategies in terms of performance in distributed estimation. Figure \ref{fig:reg_comparison} complements this result in the sense that it shows diffusion outperforms consensus in the AA-social learning setting as well. Such distinction was not present for GA-social learning at least asymptotically. The GA-diffusion and GA-consensus have the same asymptotic learning rate.

 \renewcommand{\thesubfigure}{\roman{subfigure}}
\begin{figure*}[hbt!]
	\centering
	\subfloat[a][]{
  \includegraphics[width=.4\linewidth]{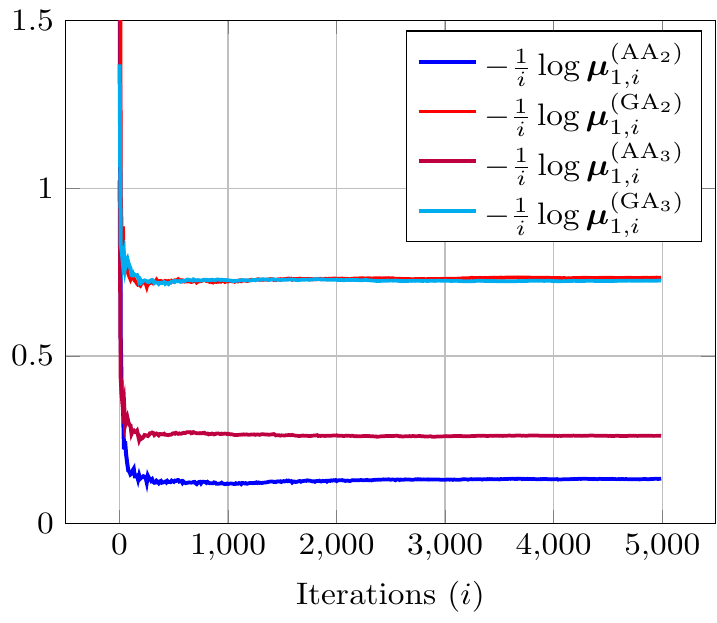}\label{fig:reg_comparison}
  }\hfil
  \subfloat[b][]{
  \includegraphics[width=.4\linewidth]{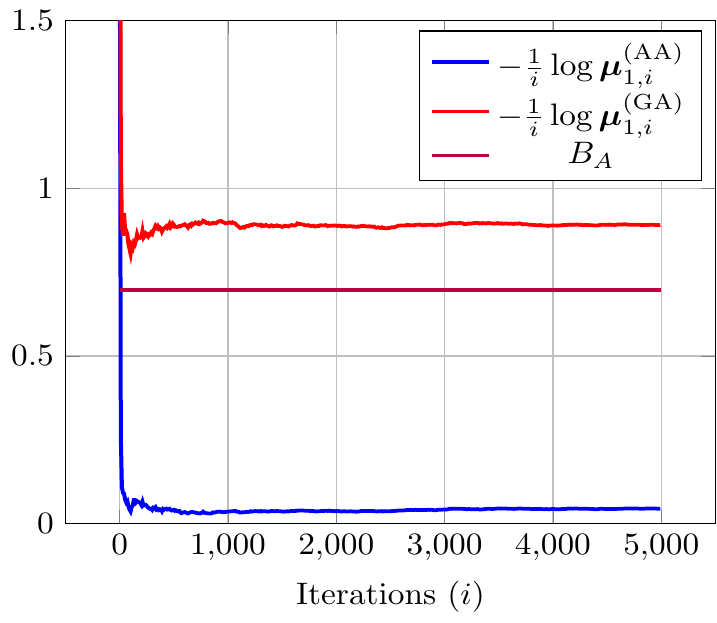}\label{fig:conjecture}
  }\hfil
  \subfloat[c][]{
  \includegraphics[width=.4\linewidth]{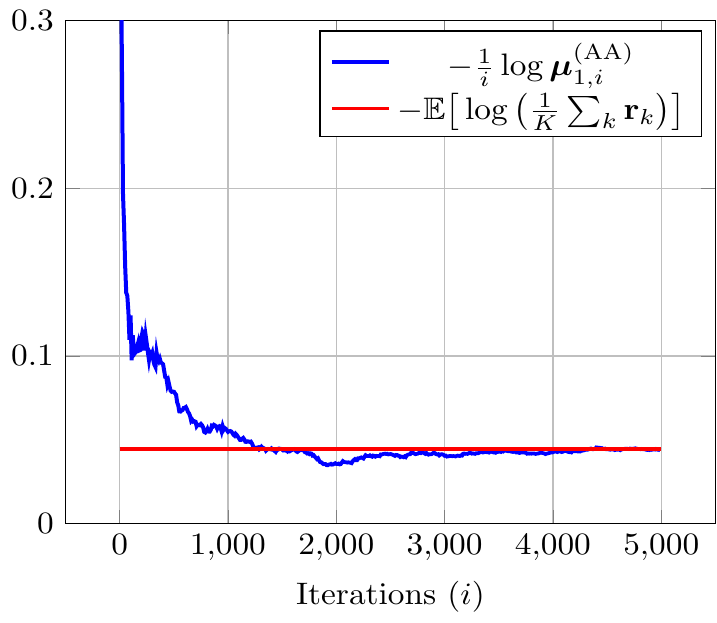}\label{fig:rankone}
  }\caption{\small (i) Comparison of the decay rates of different network connectivities. The superscripts on the decay rates indicate the algorithm, and the network where we execute the algorithm, e.g., $\AA_2$ means AA-diffusion is executed on the 2-regular network \ref{fig:network2reg} with combination matrix $A_2$. The decay rates of GA-diffusion are the same for networks \ref{fig:network2reg} and \ref{fig:network4reg} whereas they differ when AA-diffusion is run. (ii) The decay rates of the non-regular network \ref{fig:networknotreg} are plotted under AA and GA-diffusion in the exchangeable setting; and the closed form bound $B_A$ is shown. (iii) Time-scaled minus log-belief of agent 1 in the fully connected network with a rank-one combination matrix. This entity approaches the closed form expression we provided in Proposition \ref{prop:rankone}.}
\end{figure*}

The second experiment involves an exchangeable network. To this end, we set $\beta_k = 3$ for all $k$. Referring to the result in Theorem \ref{thm:exchangeable}, we define
\begin{equation}
    B_A \triangleq \dob_A \rho^{(\GA)} - (1-\dob_A)\E\Big[\log\Big(\frac 1 K\sum_k \br_k\Big)\Big].
\end{equation}
Recall that $\dob_A$ was defined in \eqref{eq:dobrushin_coeff}. We simulated AA-diffusion on the non-regular network $\ref{fig:networknotreg}$ with its combination matrix chosen according to a lazy Metropolis rule \cite{Metropolis1953}. More precisely, we take $B \triangleq [b_{\ell k}]$ with $b_{\ell k} = \max\{\deg(\ell),\deg(k)\}^{-1}$ for $\ell \neq k$ and $b_{\ell \ell} = 1-\sum_{k\neq\ell}b_{\ell k}$. Then, we set $A_{\text{non}} =  \alpha I + (1-\alpha) B$. The matrix $A_{\text{non}}$ is also doubly stochastic, hence, its Perron vector is the same as networks \ref{fig:network2reg} and \ref{fig:network4reg}. We have plotted the decay rates of agent 1 with the combination matrix $A_{\text{non}}$ and also indicated the bound $B_A$ in Figure \ref{fig:conjecture}. Since the Dobrushin coefficient $\dob_A = 0.81$, $B_A$ gives a non-trivial bound.

The third experiment includes a fully-connected network with combination matrix $a_{\ell k} = 1/K$. Observe $A$ is rank-one and from Proposition \ref{prop:rankone}, we know
\begin{equation}
    \rho^{(\AA)} = -\E\Big[\log\Big(\frac 1 K\sum_k \br_k\Big)\Big] \approx 0.0457.
\end{equation}
The AA-diffusion decay rate corresponding to the final experiment is plotted in Figure \ref{fig:rankone} and it is readily seen that the minus log-belief with time scaling approaches 0.0457.

The next part of this section consists of numerical examples where the ``inept agent'' phenomenon is observed --- recall the end of Section \ref{sec:subadditive}. We set the inept agent to agent 1, and select $\beta_1 = 1$, i.e., $D(L_1(\cdot|\theta^\circ)||L_1(\cdot|\theta)) = 0$. The other $\beta_k$'s remain the same as in the simulations in Figure \ref{fig:reg_comparison}. The simulations take place over the non-regular network \ref{fig:networknotreg}, and four different connecivity matrices $A_{\alpha_1}$, $A_{\alpha_2}$, $A_{\alpha_3}$, $A_{\alpha_4}$ are set with the same lazy Metropolis rule in \ref{fig:conjecture}, with four different $\alpha$ values $\alpha_1 = 0.01$, $\alpha_1 = 0.5$, $\alpha_1 = 0.8$, $\alpha_4 = 0.95$. Note that the Perron vector remains unchanged when we change $\alpha$, hence the decay rate of GA should also remain unchanged. This is observed in Figure \ref{fig:inept}. However, as $\alpha$ increases, the inept agent 1 becomes more self-confident and the learning rate for AA decreases drastically, which is evident from Figure \ref{fig:inept}.

\begin{figure*}[hbt!]
	\centering
	\subfloat[a][]{
  \includegraphics[width=.35\linewidth]{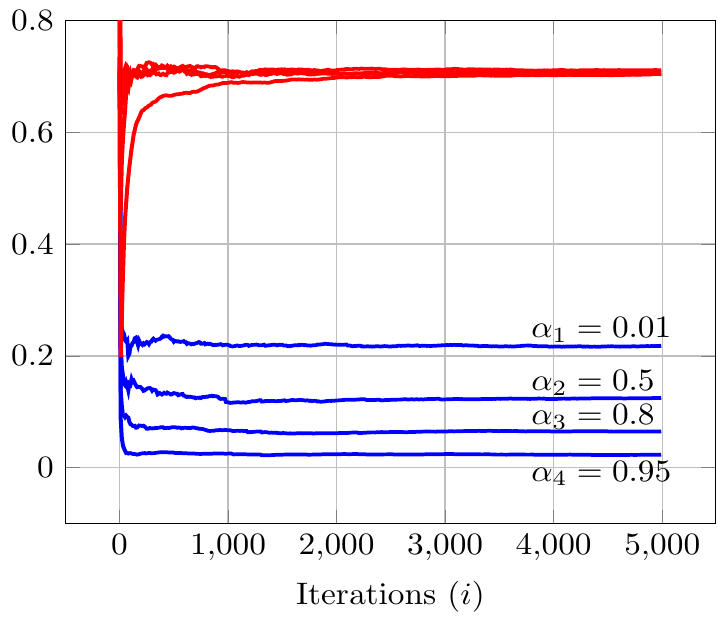}\label{fig:inept}
  }\hfil
  \subfloat[b][]{
  \includegraphics[width=.35\linewidth]{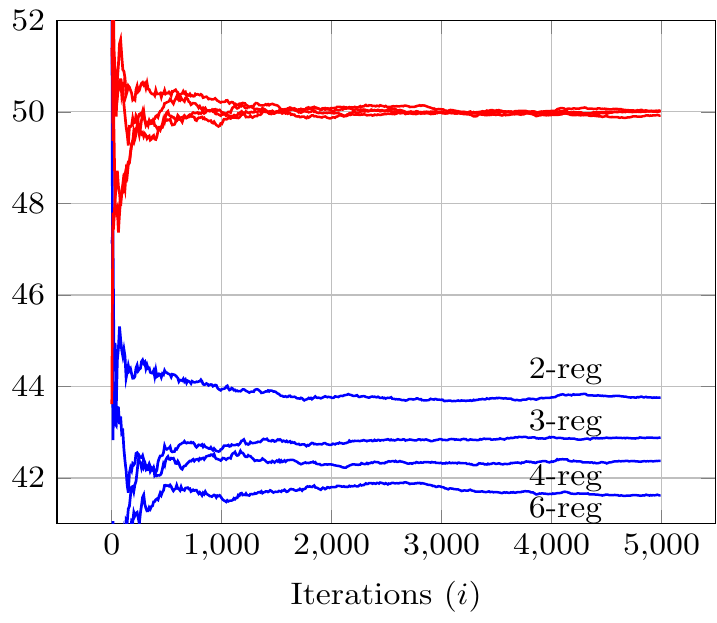}\label{fig:exch_regular}
  }\hfil
  \subfloat[c][]{
  \includegraphics[width=.35\linewidth]{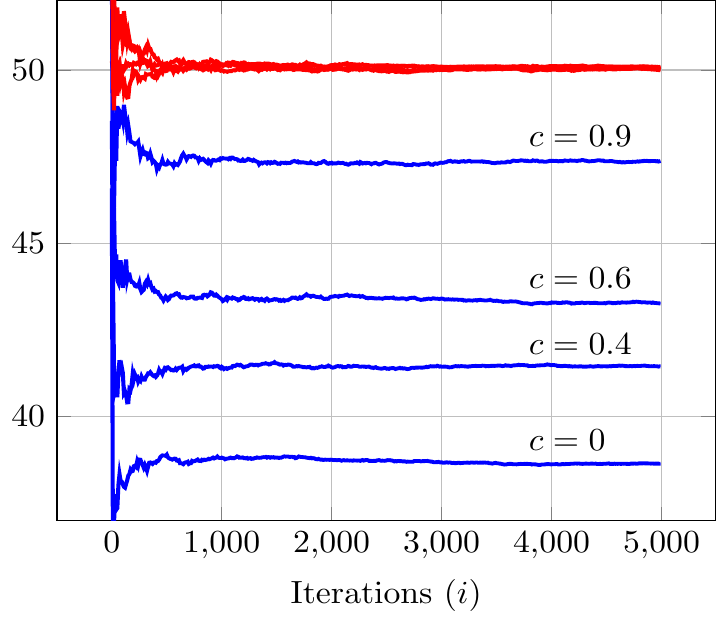}\label{fig:exch_corr}
  }
  \caption{\small ($i$) Numerical example for the ``inept agent'' phenomenon mentioned in the end of Section \ref{sec:subadditive}. We have drawn the time-scaled minus log-beliefs, i.e., $-\frac 1 i\log \bmu_{1,i}$'s of agent 1 over the network \ref{fig:networknotreg} and with connectivity matrices $A_{\alpha_1}$, $A_{\alpha_2}$, $A_{\alpha_3}$, $A_{\alpha_4}$. The red curves correspond to the decay rates of GA-diffusion and the blue curves correspond to the decay rates of AA-diffusion. The decrease in the AA-diffusion decay rate with respect to the increase in $\alpha$ is evident. ($ii$) Comparison of different network connectivities in an exchangeable setting. The red curves correspond to the decay rates of GA-diffusion and the blue curves correspond to the decay rates of AA-diffusion simulated on 2,3,4,6-regular networks. It is clearly visible that the decay rate decreases with network connectivity. ($iii$) Comparison of decay rates under different joint distributions for the Gaussian setting. The red curves correspond to the decay rates of GA-diffusion and the blue curves correspond to the decay rates of AA-diffusion simulated for $\Sigma(c)$, $c=0,0.4,0.6,0.9$.}
 \end{figure*}

For the remaining simulations, we assume that, under $\theta^\circ$, agents observe standard Gaussian random variables and under $\theta$ they observe unit-variance Gaussian random variables with mean 10. The $K \times K$ covariance matrix of the data under both hypotheses is given by 
\begin{equation}\label{eq:exp_sigmac_def}
    \Sigma(c)\triangleq c\mathds{1}_K\mathds{1}^{\T}_K + (1-c)I_{K}
\end{equation}
for a $c \in [0,1]$. Observe that $c = 1$ corresponds to the case where all agents observe the same data and $c = 0$ corresponds to the i.i.d. case. This setup ensures that the data is exchangeable.

The following simulation aims to investigate the effect of network connectivity on the decay rates of AA-diffusion under exchangeable networks. We choose $c = 0.5$, and the network is simulated on $D = $ 2,3,4,6-regular networks. The $D$ regular networks are constructed as follows: The neighbor of agent $k$ is set as 
\begin{equation}
    \cN_k = \{k-\lfloor D/2\rfloor,\dots k-1,k+1,\dots,k+\lceil D/2\rceil\} \mod{K}.
\end{equation}
Note that the 2-regular network constructed with this method is equivalent to that in Figure \ref{fig:network2reg} whereas the 3-regular network is different from the one in Figure \ref{fig:network4reg}. The connectivity matrices are set as in \ref{fig:reg_comparison} with the same $\alpha = 0.05$. In line with Theorem \ref{thm:exchangeable}, Figure \ref{fig:exch_regular} illustrates that the performance gap increases with connectivity. Finally,  we compare the decay rates under various joint distributions under exchangeable setting. We simulate AA-diffusion under the same Gaussian setting for $c = 0,0.4,0.6,0.9$. Recall that the covariance matrix was given by $\Sigma(c)$ from \eqref{eq:exp_sigmac_def}. We also observe that the decay rate increases when $c$ increases, and in particular, is equal to the decay rate under GA-diffusion for $c = 1$ --- every agent observes same data if $c=1$.

\section{\textbf{Discussion}}
In this work, we compared the decay rates under arithmetic and geometric averaging of beliefs for social learning over networks. For arithmetic averaging strategies, we established that the beliefs on wrong hypotheses decay exponentially almost surely, and the decay rates are the same among all agents. We provided upper and lower bounds for the decay rates, one being the decay rate corresponding to geometric averaging (Proposition \ref{prop:geometric_better}) and the other one revealing an interesting phenomenon (at the end of Section \ref{sec:subadditive}), which we called the ``inept agent'' phenomenon. Specifically, if there is a highly self-confident agent in the network whose learning abilities are limited, the decay rates of every single agent in the network drop significantly. Such phenomenon is particular to arithmetic averaging and can be mitigated with geometric averaging if other agents follow a skeptical approach towards the information conveyed by the inept agent.

We studied the performance gap between arithmetic and geometric averaging in detail with an appropriate formulation that permitted the use of the strong data processing inequality. In turn, we have proved that for a broad class of networks, there is no performance gap if, and only if, all agents observe the same data. Furthermore, we have obtained closed form bounds and expressions for the performance gap for some special instances. The strong data processing coefficient $\eta_A$ may not be sufficiently tight for our analyses and it might be replaced with $\tilde \eta_A$ for obtaining tighter bounds for our problem. The calculation of $\tilde \eta_A$, however, seems challenging and might be an interesting problem on its own.

An interesting future direction is to examine how our results relate to the distributed estimation and filtering. Such methods primarily focus on the inference of continuous variables, and they typically rely on the fusion of point estimates. Nevertheless, they can be interpreted in terms of fusion of probability density functions \cite{koliander2022fusion}. Therefore, a natural question is the extend of applicability of our results for social learning to that line of work. Furthermore, all results in this work are asymptotic. In the subsequent work \cite{ssp_federated_2023}, we established the asymptotic normality of AA for federated architectures. Employing the Berry-Esseen theorem \cite{berry1941accuracy} and related tools on top of this work to understand finite sample behavior of the AA algorithm can be another interesting direction to study.

\appendix

\subsection{Behavior of $\brho_k$'s characterize the behavior of $\brho$}\label{app:A}
We first give a different form for \eqref{eq:TV_regret}:
\begin{equation}
  \liminf_{i \to \infty}\frac 1 i \bigg|\log \bigg(\sum_{k = 1}^K\mathbf{d}_{k,i}\bigg)\bigg| =-\limsup_{i \to \infty}\frac 1 i \bigg(\frac 1 K\log \bigg(\sum_{k = 1}^K\mathbf{d}_{k,i}\bigg)\bigg)
\end{equation}
as 
\begin{equation}
    \mathbf{d}_{k,i} = \frac 1 2\|\bmu_{k,i}(\theta) - e_{\theta^\circ}\|_1 \leq 1.
\end{equation}
The term above is equivalent to
\begin{align}
\limsup_{i \to \infty}\frac 1 i \log \bigg(\frac 1 K \sum_{k=1}^K \sum_{\theta\neq \theta^\circ} \bmu_{k,i}(\theta)\bigg).
\end{align}
Moreover, observe that
\begin{align}
 -\frac 1 i\log K + \max_{\substack{1\leq k\leq K \\ \theta \neq \theta^\circ}}  \frac 1 i \log \bmu_{k,i}(\theta) 
 &\leq \frac 1 i \log \bigg(\frac 1 K \sum_{k=1}^K \sum_{\theta\neq \theta^\circ} \bmu_{k,i}(\theta)\bigg) \notag \\
 &\leq \frac 1 i \log H + \frac 1 i \max_{\substack{1\leq k\leq K \\ \theta \neq \theta^\circ}} \frac 1 i \log \bmu_{k,i}(\theta),
\end{align}
and since $\dfrac 1 i \log H$ and $\dfrac 1 i \log K$ vanish as $i$ grows, the quantities of interest are
\begin{equation}
    \brho_{k}(\theta) = \limsup_{i \to \infty} \frac 1 i \log \bmu_{k,i}(\theta)
\end{equation}
and 
\begin{equation}
    \brho = \max_{\substack{1\leq k\leq K \\ \theta \neq \theta^\circ}}  \brho_{k}(\theta).
\end{equation}

\subsection{Proof of Corollary \ref{cor:existence_gamma}}\label{app:existence_gamma}

To make use of Theorem \ref{thm:Kingman}, we first need to ensure that the random matrices have all positive entries. Recall that there must exist an $n \geq 1$ such that every entry of $A^n$ is strictly positive. We choose the smallest such $n$. Assumption 1 ensures that the $\{\br_{k,i}\}$ are strictly positive with probability 1 as well --- otherwise some of the KL divergences would be infinite. We therefore replace $\bX_i$'s with the expression
\begin{equation}
    \widetilde\bX_i \triangleq A^{\T}\bR_{n(i-1)+1}\dots A^{\T}\bR_{ni},
\end{equation}
which turn out to have all positive entries. To see this, observe that 
\begin{align}
    [\widetilde \bX_i]_{11} = \sum_{\ell_1,\dots,\ell_{n-1}} a_{\ell_1 1 }a_{\ell_2\ell_1}\dots a_{1\ell_{n-1}}\br_{\ell_1,1}\dots\br_{1,n}.
\end{align}
Since $A$ is primitive, there must exist $\ell_1,\dots ,\ell_{n-1}$ such that
$a_{\ell_1 1 }\dots a_{1\ell_{n-1}} > 0$. If we choose such a path, we can observe that
\begin{align}\label{eq:primitive}
    [\widetilde \bX_i]_{11} \geq  a_{\ell_1 1 }a_{\ell_2\ell_1}\dots a_{1\ell_{n-1}}\br_{\ell_1,1}\dots\br_{1,n}.
\end{align}
As mentioned, $\br_{k,i}$'s are strictly positive and thus $[\widetilde \bX_i]_{11}$ is strictly positive as well. Similar arguments hold for other $[\widetilde \bX_i]_{\ell k}$'s as well, which proves that $\widetilde \bX_i$ is all positive. Note that $\widetilde \bX_i$'s are also i.i.d.. The next step is to check if the logarithms of the entries of $\widetilde\bX_i$ have finite expectations. First of all, since $\E[\bR_i] = I$ and $\bR_i$'s are i.i.d., it holds that
\begin{align}
    \E[\widetilde\bX_i] = \E[A^{\T}\bR_{n(i-1)+1}\dots A^{\T}\bR_{ni}] = (A^{\T})^n,
\end{align}
which further implies that 
\begin{align}
    \E[\log [\widetilde \bX]_{\ell k}] \leq \log\E[[\widetilde \bX]_{\ell k}] = \log[A^n]_{k\ell} \leq 0.
\end{align}
Therefore, the expectations of the logarithms are bounded from above. To check if they are also bounded from below, if we take the logarithms of both sides in \eqref{eq:primitive}, we obtain
\begin{equation}
    \E[\log[\widetilde \bX_i]_{11}] \geq  \E\log(a_{\ell_1 1}a_{\ell_2\ell_1}\dots a_{1\ell_{n-1}}) + \E[\log\br_{\ell_1,1}] + \dots + \E[\log\br_{1,n}].
\end{equation}
By Assumption 1, it is true that
\begin{equation}
\E[\log\br_{k,1}] = -D(L_k(.|\theta^{\circ}) || L_k(.|\theta))> -\infty.
\end{equation}
Consequently, $\E[\log[\widetilde \bX_i]_{11}]$ is bounded from below. A similar argument works for other $\E[\log[\widetilde \bX_i]_{\ell k}]$'s as well. 

Since $[\widetilde \bX_i]_{\ell k}$'s are strictly positive and their logarithms have finite expectations, we can invoke Theorem \ref{thm:Kingman} to obtain
\begin{equation}
\bm{\tilde\gamma} =  \lim _{i \rightarrow \infty} \frac 1 i \log [\widetilde\bY_{i}]_{\ell k}
\end{equation}
with $\widetilde \bY_i \triangleq \prod_{j = 1}^i \widetilde \bX_j$. Since $\widetilde \bY_i = \bY_{ni}$, the above equation implies
\begin{equation}
     \bm{\gamma} = \frac {\bm{\tilde\gamma}}{n} = \lim _{i \rightarrow \infty} \frac 1 {ni} \log[\bY_{ni}]_{\ell k} = \lim _{j \rightarrow \infty} \frac 1 j \log [\bY_{j}]_{\ell k}.
\end{equation}
Moreover, since $\{A^{\T}\bR_{i}\}$ is an i.i.d. sequence, Kolmogorov's zero-one law \cite{williams_1991} implies that the finite limit $\bm{\gamma}$ is almost surely a constant. Hence, 
\begin{equation}
\gamma \triangleq \E[\bm{\gamma}] = \lim _{i \rightarrow \infty} \frac 1 i \E\big[\log [\bY_{i}]_{\ell k}\big].
\end{equation}
and the proof is complete.

\subsection{Proof of Theorem \ref{thm:existence}}\label{app:existence} It is sufficient to establish the result for one agent $k$; a similar argument applies to the other agents. We establish the proof in two parts:
\begin{itemize}
    \item $(i)$ $\limsup\limits_{i \to \infty} \dfrac 1 i \log \bmu_{k,i} \leq \gamma$,
    \item $(ii)$ $\liminf\limits_{i \to \infty} \dfrac 1 i \log \bmu_{k,i} \geq \gamma$.
\end{itemize}
The part $(i)$ of the proof makes use of the extremal process $\{\bnu_{i}\}$. Setting a $\delta > 0$, we define the events
\begin{equation}
    \cH_{i_0}^{+}(\delta) \triangleq \Big\{\omega \in \Omega: \exists i_1\geq i_0, \forall i \geq i_1,
    \frac 1 {i-i_0} \max_{\ell,k}\log [\bY_{i_0}^i]_{\ell k} \leq \gamma+\delta\Big\}
\end{equation}
for every $i_0\geq 1$ with 
\begin{align}\label{eq:matrix_product}
\bY_{i_0}^i \triangleq  \prod_{j = i_0+1}^i (A^{\T}\bR_j).
\end{align}
In words, $\cH_{i_0}^{+}(\delta)$ is the event that the logarithms of all entries of ${\bY_{i_0}^i}$ eventually become smaller than $(\gamma+\delta)(i-i_0)$. Since $\{A^{\T}\bR_i\}$ is i.i.d., $\bY_i$ is stationary; and $\bP{\cH_{i_0}^{+}(\delta)}$ does not depend on $i_0$. Corollary \ref{cor:existence_gamma} states that $\bP{\cH_{0}^{+}(\delta)} = 1$, and we deduce that $\bP{\cH_{i_0}^{+}(\delta)} = 1$. As any countable intersection of unit-probability events is also unit-probability, we have $\bP{\cH^{+}(\delta)} = 1$ where 
\begin{equation}
    \cH^{+}(\delta) \triangleq \cap_{i_0}\cH_{i_0}^{+}(\delta).
\end{equation}
Consider an $\omega \in \cG(\epsilon)\cap \cH^{+}(\delta)$, with $\cG(\epsilon)$ defined in \eqref{eq:set_G}. Repeated application of  \eqref{eq:matrix_form} yields
\begin{equation}
    \bnu_{i} = (1-\epsilon)^{\bm{i}_0-i}\bY_{\bm{i}_0}^i \bnu_{\bm{i}_0}, \quad \forall i \geq \bm{i}_0(\omega).
    \end{equation}
Furthermore, since $\omega \in \cH^{+}(\delta)$ implies 
 \begin{equation}
      [\bY_{\bm{i}_0}^i]_{\ell k} \leq e^{(i-\bm{i}_0)(\gamma+\delta)} 
 \end{equation}
for all $i \geq \bm{i}_1(\omega)$,  
 \begin{equation}
    \bnu_{k,i} \leq e^{(i-\bm{i}_0)(\gamma+\delta+\epsilon')}, \quad \forall i \geq \bm{i}_1(\omega)
\end{equation}
with some $\epsilon' \triangleq -\log(1-\epsilon)$. Thus, 
\begin{equation}
\limsup_{i \to \infty} \frac 1 i \log\bmu_{k,i} \leq \limsup_{i \to \infty} \frac 1 i \log\bnu_{k,i} \leq \gamma + \delta + \epsilon'
\end{equation}
for all $\omega \in \cG(\epsilon)\cap \cH^{+}(\delta)$. Since $\cG(\epsilon)$ and $\cH^{+}(\delta)$ are both probability one events, so is their intersection. This completes the part $(i)$ of the proof.

The part ($ii$) of the proof requires the construction of another extremal process $\{\bzet_{k,i}\}$, which lower bounds $\{\bmu_{k,i}\}$. Similar to the part $(i)$, we define the events 
\begin{equation}
    \cH_{i_0}^{-}(\delta) \triangleq \Big\{\omega \in \Omega: \exists i_1\geq i_0, \forall i \geq i_1, \frac 1 {i-i_0} \min_{\ell,k}\log [\bY_{i_0}^i]_{\ell k} \geq \gamma-\delta\Big\}
\end{equation}
and 
\begin{equation}
    \cH^{-}(\delta) \triangleq \cap_{i_0}\cH_{i_0}^{-}(\delta).
\end{equation}
Then, for any $\omega \in \cG(\epsilon)\cap \cH^{-}(\delta)$, we have $\bmu_{k,i}(\theta) \leq \epsilon$ for $i \geq \bm{i}_0(\omega)$ and 
\begin{align}
    \bmu_{k,i}(\theta) &\geq \sum_{\ell \in \cN_k} a_{\ell k} \dfrac{\bmu_{\ell,i-1}(\theta)\br_{\ell,i}(\theta)}{1+\epsilon \sum\limits_{\theta' \neq \theta^\circ} \br_{\ell,i}(\theta)} \notag \\
   & \geq \sum_{\ell \in \cN_k} a_{\ell k} \dfrac{\bmu_{\ell,i-1}(\theta)\br_{\ell,i}(\theta)}{1+\epsilon \sum\limits_{\theta' \neq \theta^\circ }\sum_{\ell} \br_{\ell,i}(\theta')}.
\end{align}
We introduce the vector $\bzet_i=[\bzet_{1,i}, \bzet_{2,i},\dots,\bzet_{K,i}]^{\T}$ and define its evolution for $i\geq \bm{i}_0$ as
\begin{align}
    \bzet_i = (A^{\T}\bR_i) \bzet_{i-1}, \quad\bzet_{\bm{i}_0} = \bmu_{\bm{i}_0}.
\end{align}
This implies that
\begin{align}
    \frac 1 i \log\bmu_{k,i} &\geq  \frac 1 i \log\bzet_{k,i}- \frac 1 i \sum_{j = i_0}^i \log\bigg(1+\epsilon \sum_{\theta' \neq \theta^\circ }\sum_{\ell} \br_{\ell,j}(\theta')\bigg) \notag \\
    &\stackrel{(a)}{\geq} \frac 1 i \log\bzet_{k,i}- \epsilon \frac 1 i \sum_{j = i_0}^i  \sum_{\theta' \neq \theta^\circ }\sum_{\ell} \br_{\ell,j}(\theta')
\end{align}
where \( (a) \) follows from $\log(1+x) \leq x$. 
Observe that by the strong law of large numbers
\begin{align}
    \lim_{i \to \infty} \epsilon \frac 1 i \sum_{j = i_0}^i  \sum_{\theta' \neq \theta^\circ }\sum_{\ell} \br_{\ell,j}(\theta') &= \epsilon K (H-1)\E[\br_{\ell,j}] \notag \\&= \epsilon K (H-1).
\end{align}
Hence, proceeding similarly to the previous part we obtain that almost surely
\begin{align}
    \liminf_{i \to \infty} \frac 1 i \log\bmu_{k,i} &\geq \liminf_{i \to \infty} \frac 1 i \log\bzet_{k,i} - \epsilon K (H-1) \notag \\ &\geq \gamma - \delta - \epsilon K (H-1).
\end{align}
Since $\delta$ and $\epsilon$ are arbitrary, the proof is complete.

\subsection{Conditions for the Subadditive Ergodic Theorem}\label{app:subadditive}
We first restate the subadditive ergodic theorem.
\begin{theorem}[\kern-1ex\cite{Kingman}, Theorem 1]\label{thm:subadditive}
Let $\{\bx_{ij}\}_{j\leq i}$ be a doubly indexed random sequence that satisfies the following conditions.
\begin{itemize}
    \item[($i$)] The distribution of $\bx_{ij}$ depends only on $j-i$.
    \item[($ii$)] $\bx_{ij} \leq \bx_{ik} + \bx_{kj}$, for all $i \leq k \leq j$.
    \item[($iii$)] $\frac 1 j \E[\bx_{1j}] \geq -\kappa$ for some constant $\kappa$ and for all $j\geq 1$.
\end{itemize}
Then, the finite limit
\begin{equation}
    \bm{\gamma} = \lim _{i \rightarrow \infty} \frac 1 i \bx_{1i}
\end{equation}
exists almost surely and in the mean, and furthermore,
\begin{equation}
    \E[\bm{\gamma}] = \lim _{i \rightarrow \infty} \frac 1 i \E[\bx_{1i}].
\end{equation}
\end{theorem}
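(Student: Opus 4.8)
The plan is to separate an elementary deterministic part---the convergence of the normalized means---from the genuinely hard probabilistic part---the almost-sure and $L^1$ convergence of the sample paths---and to recombine them by a sandwich argument at the end. Throughout I read hypothesis (i) at the process level, so that there is a measure-preserving shift $\theta$ sending $\bx_{i,j}\mapsto\bx_{i+1,j+1}$, with invariant $\sigma$-field $\mathcal{I}$, and I take the increment $\bx_{1,2}$ to be integrable (so that each $\E[\bx_{1,n}]$ is finite, using $\bx_{1,n}\le\sum_{k=1}^{n-1}\bx_{k,k+1}$ and (iii) for the two-sided bound). First I would dispose of the means: setting $a_n\triangleq\E[\bx_{1,1+n}]$, which by (i) equals $\E[\bx_{i,i+n}]$ for every $i$, subadditivity (ii) gives $a_{m+n}\le a_m+a_n$, so Fekete's lemma yields that $\gamma^\star\triangleq\lim_n a_n/n=\inf_n a_n/n$ exists, and (iii) forces $\gamma^\star\ge-\kappa>-\infty$. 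This fixes the candidate value and reduces the theorem to proving that $\tfrac1i\bx_{1,i}\to\bm\gamma$ almost surely and in mean with $\E[\bm\gamma]=\gamma^\star$.

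For the pathwise statement I would write $\bar{\bm\gamma}\triangleq\limsup_i\tfrac1i\bx_{1,i}$ and $\underline{\bm\gamma}\triangleq\liminf_i\tfrac1i\bx_{1,i}$ and first record two structural facts. (a) \emph{Invariance}: peeling the first increment via $\bx_{1,i}\le\bx_{1,2}+\bx_{2,i}$, dividing by $i$, and letting $i\to\infty$ shows $\bar{\bm\gamma}$ and $\underline{\bm\gamma}$ are $\theta$-invariant almost surely, hence $\mathcal I$-measurable. (b) \emph{Integrability}: iterating (ii) gives the telescoping bound $\bx_{1,i}\le\sum_{k=1}^{i-1}\bx_{k,k+1}$, whose right-hand side is an additive stationary average and therefore converges almost surely and in $L^1$ to $\E[\bx_{1,2}\mid\mathcal I]$ by Birkhoff's pointwise ergodic theorem; this makes the family $\{(\tfrac1i\bx_{1,i})^+\}_i$ uniformly integrable, which is precisely what will later upgrade almost-sure to $L^1$ convergence and justify a reverse-Fatou step.

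The crux is the \emph{block (tiling) decomposition} for the upper bound. Fixing a block length $m$ and writing $k=\lfloor i/m\rfloor$, subadditivity gives $\bx_{1,i}\le\sum_{j=0}^{k-1}\bx_{jm+1,(j+1)m+1}+\bx_{km+1,i}$. The $k$ summands form a stationary gap-$m$ sequence, so Birkhoff's theorem sends their normalized sum to $\tfrac1m\E[\bx_{1,m+1}\mid\mathcal I]$, while the boundary residual $\bx_{km+1,i}$ must be shown to be $o(i)$; letting $i\to\infty$ and then taking the infimum over $m$ yields $\bar{\bm\gamma}\le\lim_m\tfrac1m\E[\bx_{1,m+1}\mid\mathcal I]$ almost surely, an invariant variable of expectation $\gamma^\star$. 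Combining the reverse-Fatou estimate $\E[\bar{\bm\gamma}]\ge\lim_i\tfrac1i a_i=\gamma^\star$ (legitimate by the uniform integrability in (b)) with $\E[\bar{\bm\gamma}]\le\gamma^\star$ forces $\bar{\bm\gamma}=\gamma^\star$ a.s. in the ergodic case (and $\bar{\bm\gamma}=\E[\bm\gamma\mid\mathcal I]$ in general).

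It remains to show the matching lower bound $\underline{\bm\gamma}\ge\gamma^\star$. Here I would use the exact telescoping identity $\bx_{1,N}=\sum_{j=0}^{N-2}(\bx_{1+j,N}-\bx_{2+j,N})$, note that the increments $g_k\triangleq\bx_{1,1+k}-\bx_{2,1+k}$ are bounded above by $\bx_{1,2}$ and converge monotonically as $k\to\infty$, and apply Birkhoff to the limiting increments to get $\liminf_i\tfrac1i\bx_{1,i}\ge\E[g_\infty\mid\mathcal I]$, identifying $\E[g_\infty]=\gamma^\star$. Together with $\underline{\bm\gamma}\le\bar{\bm\gamma}$ this pinches the two expectations to $\gamma^\star$, so $\underline{\bm\gamma}=\bar{\bm\gamma}\triangleq\bm\gamma$ almost surely; uniform integrability then promotes this to $L^1$ convergence and gives $\E[\bm\gamma]=\lim_i\tfrac1i a_i=\gamma^\star$, with $\bm\gamma$ the constant $\gamma^\star$ when $\theta$ is ergodic. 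I expect the two delicate points to be the uniform control of the boundary remainder $\bx_{km+1,i}$ in the block step (needing a maximal estimate of the form $\tfrac1i\max_{0\le\ell<m}|\bx_{km+1,km+1+\ell}|\to0$) and the monotone-increment argument for the lower bound, since a naive Fatou estimate there points in the wrong direction.
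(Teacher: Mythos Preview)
The paper does not actually prove this statement: Theorem~\ref{thm:subadditive} is Kingman's subadditive ergodic theorem, quoted verbatim from \cite{Kingman} as a black box. The appendix in which it appears only \emph{verifies} that the particular process $-\log\|\bY_i^j\|_-$ satisfies hypotheses (i)--(iii) so that the theorem may be invoked; no argument for the theorem itself is given or needed. So there is nothing in the paper to compare your proposal against.

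That said, your sketch is a reasonable outline of the classical route (Fekete for the means, a block/tiling application of Birkhoff for the upper bound, then a matching lower bound), and you correctly flag the two genuinely delicate spots. One point deserves a warning: your lower-bound step asserts that the increments $g_k\triangleq\bx_{1,1+k}-\bx_{2,1+k}$ ``converge monotonically as $k\to\infty$''. Subadditivity gives $g_k\le\bx_{1,2}$, but it does \emph{not} give monotonicity of $g_k$ in $k$; there is no reason $\bx_{1,1+k}-\bx_{2,1+k}$ should be monotone for a general subadditive array. The standard proofs (Kingman's original decomposition into an additive part plus a nonnegative remainder, or the later simplifications of Liggett and Steele) handle the lower bound by a different mechanism---either a careful maximal-type lemma or an averaging over the starting index---precisely because a naive Fatou/monotone argument points the wrong way, as you yourself note. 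If you want a self-contained proof, that step needs to be replaced; otherwise, citing \cite{Kingman} as the paper does is entirely appropriate here.
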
\putqed \vspace{-1em} \\

To use this result, we replace $\bx_{ij}$ with $-\log\|\bY_i^j\|_-$ (defined in \eqref{eq:matrix_product}). Since $\{A^{\T}\bR_i\}$ is an i.i.d. sequence, $\bY_i^j$ is stationary and ($i$) is satisfied. In the relation
\begin{equation}
    \|\bY_i^k\|_-\|\bY_k^j\|_- \leq \|\bY_i^j\|_{-},
\end{equation}
taking the logarithm and negating both sides shows immediately that it is ($ii$) is satisfied. The only remaining condition to verify is ($iii$), i.e., whether
\begin{equation}
    \E[-\log\|\bY_i\|_-]\geq -\kappa i
\end{equation}
for some constant $\kappa$. We achieve this by following similar steps to the proof of \cite[Theorem 5]{Kingman}:
\begin{align}
    \E[\log\|\bY_i\|_-]&\leq \E[\log\|\bY_i\|_1] \notag \\
    &\stackrel{(a)}{\leq} i \E[\log\|\bY_1\|_1] \notag \\
    &\leq i\E[\log(K \max_{k, \ell} [\bY_1]_{k \ell})] \notag \\
    &\leq i\bigg(\E\bigg[\sum_{k, \ell}(\log[\bY_1]_{k \ell})^+\bigg] + \log K\bigg) \notag\\
    &\stackrel{(b)}{\leq} i \kappa
\end{align}
where $(a)$ follows from subadditivity, and $(b)$ follows from Assumption 1. Hence, $-\log\|\bY_i\|_-$ satisfies the conditions of the subadditive ergodic theorem.
\subsection{Proof of Lemma \ref{lem:kkt}}\label{app:kkt}
For $v \in \mathbb{S}_K$, it is known that \cite{Gallager} for some finite constant $c$, the KKT conditions are given by 
\begin{align}\label{eq:KKT}
\fracpartial{F}{v_k} &= c,\quad v_k > 0\\
\fracpartial{F}{v_k} &\leq c,\quad v_k = 0.
\end{align}
We however have to justify the interchange of differentiation and expectation. Observe that
\begin{align}
\fracpartial{}{v_k}\log(\sum_\ell v_\ell \br_\ell) &=\lim_{\epsilon\to 0} \frac 1 \epsilon\bigg(\!\!\log(\sum_{\ell} \! v_\ell \br_\ell + \epsilon \br_k) \!- \!\log(\sum_\ell \! v_\ell \br_\ell)\!\!\bigg)\notag\\
&\leq \frac{\br_k}{\sum_{\ell} v_\ell \br_\ell} \label{eq:dominated} 
\end{align}
where we used the inequality $\log(1+x)\leq x$. Since all the random variables in \eqref{eq:dominated} are uniformly bounded by $\dfrac{\br_k}{\sum_{\ell} v_\ell \br_\ell}$, and $\E\big[\dfrac{\br_k}{\sum_{\ell} v_\ell \br_\ell}\big] < \infty$, the dominated convergence theorem \cite{williams_1991} justifies interchanging differentiation and expectation. Therefore,
\begin{equation}
    \fracpartial{F}{v_k} = -\frac{\pi_k}{v_k} + \E\bigg[\frac{\br_k}{\sum_{\ell} v_\ell \br_\ell}\bigg].
\end{equation}
Also, $\pi$, being the Perron vector of a primitive matrix $A$, has all strictly positive entries. Hence, setting any $v_k = 0$ will make $F(v)$ infinite and all $v_k$ of our interest must also have all positive entries. As a result, from \eqref{eq:KKT},
\begin{equation}
    -\frac{\pi_k}{v_k} + \E\bigg[\frac{\br_k}{\sum_{\ell} v_\ell \br_\ell}\bigg] = \mu.
\end{equation}
In this equation, multiplying both sides with $v_k$ and summing over each agent $k$, we get $\mu = 0$, which concludes the proof.
\subsection{Proof of Theorem \ref{thm:gap}}\label{app:jensen}
We first show that $\rho^{(\GA)} = \rho^{(\AA)}$ implies $\br_k=\br_\ell$ for all $k,\ell$. 
Applying Jensen's inequality to exchange the logarithm and expectation, we observe that 
\begin{equation}
    G(v) \geq \sum_k \pi_k\E\bigg[\log\frac{\br_k}{\sum_{\ell} v_\ell \br_\ell}\bigg] + \E\bigg[\log(\sum_k v_k \br_k)\bigg] + \rho^{(\GA)}  = 0.
\end{equation}
Since the logarithm is a strictly convex function, $G(v) = 0$ only when
\begin{equation}
    \log\E\bigg[\frac{\br_k}{\sum_{\ell} v_\ell \br_\ell}\bigg] = \E\bigg[\log\frac{\br_k}{\sum_{\ell} v_\ell \br_\ell}\bigg].
\end{equation} 
In other words, $G(v) = 0$ only when $\dfrac{\br_k}{\sum_{\ell} v_\ell \br_\ell}$ is constant with probability one for all $k$. This means that for any $k,\ell$,
\begin{equation}
    \frac{\br_k}{\sum_{\ell} v_\ell \br_\ell}\frac{\sum_{\ell} v_\ell \br_\ell}{\br_m} = \frac{\br_k}{\br_m}
\end{equation}
is also constant. Also, since by definition
\begin{equation}
    \E[\br_k]=\E[\br_m]=1, 
\end{equation}
$\br_k$ must be equal to $\br_m$ with probability one. This shows that if $\br_k \neq \br_\ell$ for some $k$, $\ell$ with non-zero probability, then $G(v)$ is strictly positive. Also, due to the fact that
\begin{equation}
    \inf_{u\in \mathbb{S}_K}\E[D(\pi||\bu_1)] \geq \inf_v F(v) \geq \inf_v G(v) > 0
\end{equation}
and $\eta_A < 1$, Eq. \eqref{eq:infimization} implies $\rho^{(\GA)} > \rho^{(\AA)}$.

The other direction is more straightforward to establish. If for all $k$, $\ell$, we have the relation $\br_k=\br_\ell = \br$, it holds that
\begin{align}
    [\bY_i]_{k\ell} &= \prod_{j = 1}^i \br_j [A^i]_{\ell k}.
\end{align}
This, in turn, implies
\begin{equation}
    \frac 1 i \log[\bY_i]_{k\ell} \to \E[\log \br] = -\rho^{(\GA)} 
\end{equation}
since every column of $A^i$ tends to its Perron vector $\pi$.
\bibliographystyle{IEEEtran}
\bibliography{ref.bib}

\end{document}